\newcommand{\pname}[1]{\textnormal{\textsc{#1}}}
\newcommand{\cclass}[1]{\textnormal{\textsf{#1}}}
\newcommand{\HED}{\pname{$H$-free Edge Deletion}}
\newcommand{\HDED}{\pname{$H'$-free Edge Deletion}}
\newcommand{\HEC}{\pname{$H$-free Edge Completion}}
\newcommand{\TED}{\pname{$T$-free Edge Deletion}}
\newcommand{\TDED}{\pname{$T'$-free Edge Deletion}}
\newcommand{\PTED}{\pname{$P_3$-free Edge Deletion}}
\newcommand{\PFED}{\pname{$P_4$-free Edge Deletion}}
\newcommand{\SLED}{\pname{$S_{\ell}$-free Edge Deletion}}
\newcommand{\SLOED}{\pname{$S_{\ell-1}$-free Edge Deletion}}
\newcommand{\SLLED}{\pname{$S_{\ell_1,\ell_2}$-free Edge Deletion}}
\newcommand{\SLOLOED}{\pname{$S_{\ell_1-1,\ell_2-1}$-free Edge Deletion}}
\newcommand{\RED}{\pname{$R$-free Edge Deletion}}
\newcommand{\KTED}{\pname{$K_3$-free Edge Deletion}}
\newcommand{\SHED}{\pname{$sH$-free Edge Deletion}}
\newcommand{\THED}{\pname{$tH$-free Edge Deletion}}
\newcommand{\TOHED}{\pname{$(t-1)H$-free Edge Deletion}}
\newcommand{\HOED}{\pname{$H_1$-free Edge Deletion}}
\newcommand{\TKTED}{\pname{$tK_2$-free Edge Deletion}}
\newcommand{\TWKTED}{\pname{$2K_2$-free Edge Deletion}}
\newcommand{\HBEC}{\pname{$\overline{H}$-free Edge Completion}}
\newcommand{\HBED}{\pname{$\overline{H}$-free Edge Deletion}}
\newcommand{\REC}{\pname{$R$-free Edge Completion}}
\newcommand{\PIED}{\pname{$\Pi$ Edge Deletion}}
\newcommand{\PIEC}{\pname{$\Pi$ Edge Completion}}
\newcommand{\PIEE}{\pname{$\Pi$ Edge Editing}}
\newcommand{\PIVD}{\pname{$\Pi$ Vertex Deletion}}
\newcommand{\TSAT}{\pname{3-SAT}}
\newcommand{\VC}{\pname{Vertex Cover}}
\newcommand{\CLED}{\pname{$C_\ell$-free Edge Deletion}}
\newcommand{\NP}{\cclass{NP}}
\newcommand{\NPC}{\cclass{NP-complete}}
\newcommand{\NPH}{\cclass{NP-hard}}
\newcommand{\coNPpoly}{\cclass{coNP/poly}}
\DeclareMathOperator{\diam}{diam} 
\newtheorem{construction}{Construction}
\newtheorem{observation}[lemma]{Observation}
\begin{document}

\mainmatter              
\title{Parameterized lower bound and NP-completeness of some $H$-free Edge Deletion problems}
%
%
%
%
\author{N. R. Aravind\inst{1} \and R. B. Sandeep\inst{1}\thanks{supported by TCS Research Scholarship} \and Naveen Sivadasan\inst{2}}
\institute{Department of Computer Science \& Engineering\\
Indian Institute of Technology Hyderabad, India\\
\email{\{aravind,cs12p0001\}\makeatletter@\makeatother iith.ac.in}
\and
TCS Innovation Labs, Hyderabad, India\\
\email{naveen\makeatletter@\makeatother atc.tcs.com}}

\maketitle              
\begin{abstract}
For a graph $H$, the \HED\ problem asks whether there exist at most $k$ edges whose deletion from the input graph $G$ 
results in a graph without any induced copy of $H$. We prove that \HED\ is \cclass{NP-complete} if $H$ is a graph with at least two edges and
$H$ has a component with maximum number of vertices which is a tree or a regular graph.
Furthermore, we obtain that these \NPC\ problems cannot be 
solved in parameterized subexponential time, i.e., in time $2^{o(k)}\cdot |G|^{O(1)}$, unless Exponential Time Hypothesis fails.
\end{abstract}
\section{Introduction}
\label{sec:introduction}

Graph modification problems ask whether we can obtain a graph $G'$ from an input graph $G$ by
at most $k$ number of \emph{modifications} on $G$ such that $G'$ satisfies some properties. 
Modifications could be any kind of 
operations on vertices or edges.  
For a graph property $\Pi$, the \PIED\ problem is to check whether there exist at most $k$ edges 
whose deletion from the input graph results in a graph with property $\Pi$. \PIEC\ and \PIEE\
are defined similarly, where \textsc{Completion} allows only adding (completing) edges and \textsc{Editing}
allows both completion and deletion. Another graph modification problem is \PIVD,
where at most $k$ vertex deletions are allowed. 
The focus of this paper is on \HED. It asks whether 
there exist at most $k$ edges whose removal from the input graph $G$
results in a graph $G'$ without any induced copy of $H$. The corresponding
\textsc{Completion} problem \HEC\ is equivalent to \HBED\ where
$\overline{H}$ is the complement graph of $H$. Hence the results we 
obtain on \HED\ translate to that of \HBEC.

Graph modifications problems have been studied rigorously from 1970s onward. 
Initially, the studies
were focused on proving that a modification problem is \NPC\ or solvable in polynomial time.
These studies resulted a good yield for vertex deletion problems: Lewis and Yannakakis
proved \cite{lewis1980node} that \PIVD\ is \NPC\ if $\Pi$ is non-trivial and 
hereditary on induced subgraphs. In other words, \PIVD\ is \NPC\ if $\Pi$ is 
defined by a finite set of forbidden induced subgraphs. Interestingly, researchers 
could not find a dichotomy result for \PIED\ similar to that of \PIVD. 
The scarcity of hardness results for \PIED\ is mentioned in many
papers in the last four decades. For examples, see \cite{yannakakis1981edge} and \cite{drange2015trivially}.
It is a folklore result that \HED\ can be solved in polynomial time if $H$ is a graph with at most one edge. 
Only these \HED\ problems are known to have polynomial time algorithms.
Cai and Cai proved that \HED\ is incompressible if $H$ is 3-connected but not complete, 
and \HEC\ is incompressible if $H$ is 3-connected and has at least two non-edges,
unless \NP$\subseteq$\coNPpoly~\cite{cai2014incompressibilityj}.
Further, under the same assumption, 
it is proved that \HED\ and \HEC\ are incompressible if $H$ is a tree on at least 7 vertices, which is not a star graph and
\HED\ is incompressible if $H$ is the star graph $K_{1,s}$, where $s\geq 10$~\cite{cai2012polynomial}.
They use polynomial parameter transformations for the reductions. This implies that these problems
are \NPC. 
The \HED\ problems are \NPC\ where $H$ is $C_\ell$ for any fixed $\ell\geq 3$, claw ($K_{1,3}$) \cite{yannakakis1981edge},
$P_\ell$ for any fixed $\ell\geq 3$ \cite{el1988complexity}, $2K_2$ \cite{drange2014exploring} and
diamond ($K_4-e$) \cite{fellows2011graph}.
In this paper, we prove that \HED\ is \NPC\ if $H$ has at least two edges and has a component with maximum 
number of vertices
which is a tree or a regular graph.
For every such graph $H$, to obtain that \HED\ is \NPC, we compose a series of polynomial time reductions 
starting from the reductions from one of the four base problems: \PTED, \PFED, \KTED\ and \TWKTED.
We believe that this technique can be extended to obtain a dichotomy result - \HED\ is \NPC\
if and only if $H$ has at least two edges. The evidence for this belief is discussed in the concluding section.

Another active area of research is to give parameterized lower bounds for graph modification 
problems. For example, to prove that a problem cannot be solved in parameterized subexponential time,
i.e., in time $2^{o(k)}\cdot |G|^{O(1)}$, under some complexity theoretic assumption,
where the parameter $k$ is the size of the solution being sought. 
For this, the technique used is a linear parameterized reduction - a polynomial time reduction
where the parameter blow up is only linear - from a problem which is already known to have no
parameterized subexponential time algorithm under the Exponential Time Hypothesis (ETH). 
ETH is a widely believed complexity theoretic assumption that \TSAT\ cannot be solved in 
subexponential time, i.e., in time $2^{o(n)}$, where $n$ is the 
number of variables in the \TSAT\ instance. Sparsification
Lemma \cite{impagliazzo1998problems} implies that, under ETH, there exist no algorithm to solve 
\TSAT\ in time $2^{o(n+m)}\cdot (n+m)^{O(1)}$,
where $m$ is the number of clauses in the \TSAT\ instance. Sparsification Lemma
considerably helps to obtain linear parameterized reductions from \TSAT\ as it is allowed to have a 
parameter $k$ such that $k=O(m+n)$ in the reduced problem instance.
It is known that the base problems mentioned in the last paragraph 
cannot be solved in parameterized subexponential time, unless ETH fails.
Since all the reductions we 
introduce here are compositions of linear parameterized reductions from the base problems,
we obtain that \HED\ cannot be solved in parameterized subexponential time,
unless ETH fails, if 
$H$ is a graph with at least two edges and has a component with maximum number of vertices which is a tree or a regular graph. 

\begin{figure}[h]
\centering
\subfloat[Subfigure 1 list of figures text][$P_3$]{
\includegraphics[width=0.1\textwidth]{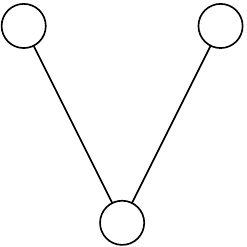}
\label{fig:subfig1}}
\qquad
\subfloat[Subfigure 2 list of figures text][$P_4$]{
\includegraphics[width=0.1\textwidth]{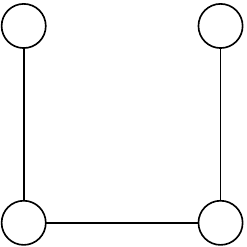}
\label{fig:subfig2}}
\qquad
\subfloat[Subfigure 2 list of figures text][$K_3$]{
\includegraphics[width=0.1\textwidth]{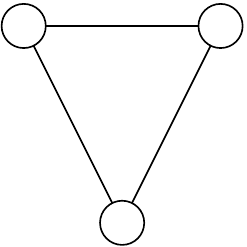}
\label{fig:subfig3}}
\qquad
\subfloat[Subfigure 2 list of figures text][$2K_2$]{
\includegraphics[width=0.1\textwidth]{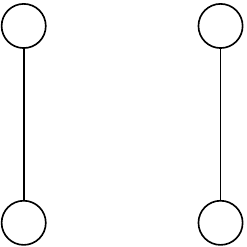}
\label{fig:subfig4}}
\caption{The four base problems are \PTED, \PFED, \KTED\ and \TWKTED.}
\label{fig:globfig}
\end{figure}

Graph modification problems have applications in DNA physical mapping 
\cite{bodlaender1996intervalizing,goldberg1995four},
numerical algebra \cite{rose1972graph}, circuit design \cite{el1988complexity}
and machine learning \cite{bansal2004correlation}.

\paragraph{Outline of the Paper:}
Section~\ref{sec:prebasics} gives the notations and terminology used in the paper.
It also introduces two constructions which are used for the reductions.
Section~\ref{sec:tree} proves that for any tree $T$ with at least two edges, 
\TED\ is \NPC\ and cannot be solved in
parameterized subexponential time, unless ETH fails.
Section~\ref{sec:regular} proves that for any connected regular graph $R$
with at least two edges, \RED\ is \NPC\ and cannot be solved in parameterized
subexponential time, unless ETH fails.
Section~\ref{sec:disconnected} combines the results from Sections~\ref{sec:tree}
and \ref{sec:regular} to prove that for any graph $H$ with at least two edges
such that $H$ has a component with maximum number of vertices which is a tree or a regular graph, \HED\ is \NPC\
and cannot be solved in parameterized subexponential time, unless ETH fails. 
As a consequence of the equivalence between \HED\ and \HBEC, we obtain 
the same results for \HBEC.
\section{Preliminaries and Basic Tools}
\label{sec:prebasics}

\paragraph{Graphs}: We consider simple, finite and undirected graphs.
The vertex set and the edge set of a graph $G$ is 
denoted by $V(G)$ and $E(G)$ respectively. 
$G$ is represented by the tuple $(V(G),E(G))$.
A simple path on 
$\ell$ vertices is denoted by $P_{\ell}$. 
For a vertex set $V'\subseteq V(G)$, $G[V']$ denotes 
the graph induced by $V'$ in $G$. $G-V'$ denotes the 
graph obtained by deleting all the vertices in $V'$ and
the edges incident to them from $G$. For an edge set $E'\subseteq E(G)$,
$G-E'$ denotes the graph $(V(G),E(G)\setminus E')$.
The diameter of a graph G, denoted by $\diam(G)$, is the number of edges
in the longest induced path in $G$.
An $r$-regular graph is a graph in which every vertex has degree $r$.
A regular graph is an $r$-regular graph for some non-negative integer $r$.
A dominating set of a graph $G$ is a set of vertices $V'\subseteq V(G)$
such that every vertex in $G$ is either in $V'$ or adjacent to 
at least one vertex in $V'$.
For a graph $G$, the disjoint union of $t$ copies of $G$ is denoted by $tG$. 
A component of a graph $G$ is a maximal connected subgraph of $G$.
A largest component of a graph is a component with maximum number of vertices.
We denote $|V(G)|+|E(G)|$ by $|G|$.
We follow \cite{douglas1996west} for further notations and terminology.

\paragraph{Technique for Proving Parameterized Lower Bounds}:
Exponential Time Hypothesis (ETH) is the assumption that \TSAT\ 
cannot be solved in time $2^{o(n)}$, where 
$n$ is the number of variables in
the \TSAT\ instance. Sparsification Lemma~\cite{impagliazzo1998problems}
implies that there exists no algorithm for \TSAT\ running in time $2^{o(n+m)}\cdot (n+m)^{O(1)}$, unless
ETH fails, where $n$ and $m$ are the number of variables and the number of clauses respectively
of the \TSAT\ instance. A linear parameterized reduction is a polynomial time reduction
from a parameterized problem $A$ to a parameterized problem $A'$ such that for every
instance $(G,k)$ of $A$, the reduction gives an instance $(G',k')$ of $B$ such that 
$k'=O(k)$. 

\begin{proposition}[\cite{platypus}]
  \label{pro:lpr}
  If there is a linear parameterized reduction from a parameterized problem $A$
  to a parameterized problem $B$ and if $A$ does not admit a parameterized subexponential
  time algorithm, then $B$ does not admit a parameterized subexponential time algorithm.
\end{proposition}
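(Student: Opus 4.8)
The plan is to establish the contrapositive: assuming that $B$ admits a parameterized subexponential time algorithm, I would build one for $A$, contradicting the hypothesis on $A$. Concretely, suppose there is an algorithm $\mathcal{B}$ solving every instance $(G',k')$ of $B$ in time $2^{o(k')}\cdot|G'|^{O(1)}$. I would then describe an algorithm $\mathcal{A}$ for $A$ that, on input $(G,k)$, first invokes the given linear parameterized reduction to produce, in polynomial time, an equivalent instance $(G',k')$ of $B$, and then runs $\mathcal{B}$ on $(G',k')$ and returns its verdict. Correctness is immediate, since the reduction preserves the yes/no answer.

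First I would record the two structural guarantees of the reduction. Since it is a \emph{linear} parameterized reduction, the new parameter satisfies $k'=O(k)$, say $k'\le c\,k$ for a fixed constant $c$. Since it runs in polynomial time, its output size is polynomially bounded in the input size, so $|G'|=|G|^{O(1)}$; in particular the polynomial factor $|G'|^{O(1)}$ incurred by $\mathcal{B}$ is itself of the form $|G|^{O(1)}$.

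The crux is then to control the exponential factor. Writing the running time of $\mathcal{B}$ as $2^{g(k')}\cdot|G'|^{O(1)}$ with $g(k')=o(k')$, I would argue that the composition with $k'\le c\,k$ yields $g(k')=o(k)$ as a function of $k$: for every $\varepsilon>0$ there is a threshold $N$ with $g(p)<(\varepsilon/c)\,p$ for all $p>N$, so whenever $k'>N$ we get $g(k')<(\varepsilon/c)\,k'\le\varepsilon\,k$, while the finitely many cases $k'\le N$ contribute only a constant, negligible for large $k$. Hence $2^{g(k')}=2^{o(k)}$, and adding the polynomial cost of the reduction, $\mathcal{A}$ runs in time $2^{o(k)}\cdot|G|^{O(1)}$ --- a parameterized subexponential time algorithm for $A$, the desired contradiction.

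This little-$o$ bookkeeping, showing that $o(k')$ survives the linear substitution $k'=O(k)$, is the only point requiring care; it is a routine property of asymptotic notation rather than a genuine obstacle, which is precisely why \emph{linearity} of the parameter blow-up (as opposed to, say, a polynomial blow-up, under which the argument would collapse) is the essential hypothesis of the statement.
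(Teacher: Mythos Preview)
Your argument is correct and entirely standard: the contrapositive, together with the two observations that $k'=O(k)$ forces $2^{o(k')}=2^{o(k)}$ and that polynomial-time reductions have polynomially bounded output, yields the claim immediately. Note, however, that the paper does not supply its own proof of this proposition at all; it is quoted as a known fact from the cited reference, so there is nothing to compare your approach against.
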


We refer the book \cite{platypus} for an excellent exposition on this and other aspects of 
parameterized algorithms and complexity.

\begin{proposition}
  \label{pro:bases}
  The following problems are \NPC. Furthermore, 
  they cannot be solved in time $2^{o(k)}\cdot |G|^{O(1)}$, unless ETH fails.
  \begin{enumerate}[(i)]
  \item\label{base:pted}  \PTED~\cite{komusiewicz2012cluster}
  \item\label{base:pfed}  \PFED~\cite{drange2014exploring}
  \item\label{base:cycle} \CLED\, for any fixed 
    $\ell\geq 3$~\cite{yannakakis1981edge}\footnote{{Yannakakis gives a polynomial time reduction from \VC\ to \CLED, for any fixed 
    $\ell\geq 3$ \cite{yannakakis1981edge}. If $\ell \neq 3$, the reduction he gives is a linear parameterized reduction. 
    When $\ell=3$, the reduction is not a linear parameterized reduction
    as it gives an instance with a parameter $k'=O(|E(G)|+k)$, 
    where $(G,k)$ is the input \VC\ instance. But, it is straight-forward to verify that composing
    the standard \TSAT\ to \VC\ 
    reduction (which is a linear parameterized reduction and gives a graph with $O(n+m)$ edges)
    with this reduction gives a linear parameterized reduction from \TSAT\ to \textsc{$K_3(C_3)$-free Edge Deletion}.}}
  \item\label{base:twkted} \TWKTED~\cite{drange2014exploring}
  \end{enumerate}
\end{proposition}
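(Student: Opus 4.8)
The plan is to treat this proposition as an aggregation of known results, so the work is not to reprove each hardness from scratch but to cite the established \NPC\ results and then verify that the underlying reductions are \emph{linear parameterized reductions} from \TSAT, which is exactly the ingredient Proposition~\ref{pro:lpr} needs in order to transfer the subexponential lower bound. Membership in \NP\ is immediate for each of the four problems: since each forbidden graph $H\in\{P_3,P_4,C_\ell,2K_2\}$ has a fixed number of vertices, testing whether a candidate edge-deletion set produces an $H$-free graph takes time $|G|^{O(1)}$ by brute force over all vertex subsets of size $|V(H)|$. Thus for every item it suffices to establish \NPH\ together with the claimed running-time lower bound.

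For items~(\ref{base:pted}), (\ref{base:pfed}) and~(\ref{base:twkted}), I would simply invoke the cited works: \PTED\ (equivalently, deletion to a disjoint union of cliques) is handled in~\cite{komusiewicz2012cluster}, while \PFED\ (deletion to a cograph) and \TWKTED\ are handled in~\cite{drange2014exploring}, and in each case the authors give a reduction from \TSAT\ (or from a problem already known to be ETH-hard) in which the target parameter $k'$ is bounded linearly in $n+m$. Combined with the Sparsification Lemma~\cite{impagliazzo1998problems}, which rules out a $2^{o(n+m)}\cdot(n+m)^{O(1)}$ algorithm for \TSAT, Proposition~\ref{pro:lpr} then yields the stated $2^{o(k)}\cdot|G|^{O(1)}$ lower bound for each.

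The delicate item is~(\ref{base:cycle}), and I expect the $\ell=3$ case to be the only real obstacle. Yannakakis's reduction~\cite{yannakakis1981edge} goes from \VC\ to \CLED; for every fixed $\ell\neq 3$ one checks directly that the parameter grows only linearly, so the reduction is already a linear parameterized reduction and the lower bound follows as above. For $\ell=3$, however, the same reduction produces a parameter $k'=O(|E(G)|+k)$ on a \VC\ instance $(G,k)$, which is not linear in $k$ alone and so does not by itself transfer the bound. The fix, which I would carry out explicitly, is to first compose with the standard \TSAT-to-\VC\ reduction: that reduction is a linear parameterized reduction and outputs a graph with only $O(n+m)$ edges, so the composite parameter becomes $O(n+m)$, restoring linearity. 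Chaining the two reductions therefore gives a linear parameterized reduction from \TSAT\ to \KTED, and Proposition~\ref{pro:lpr} finishes the case. The main thing to be careful about throughout is the bookkeeping on parameter blow-up, since everything downstream in the paper relies on these four problems having genuinely \emph{linear} parameterized lower bounds.
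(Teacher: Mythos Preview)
Your proposal is correct and matches the paper's treatment exactly: the paper offers no separate proof for this proposition, instead citing the listed references for items~(\ref{base:pted}), (\ref{base:pfed}) and~(\ref{base:twkted}), and handling the $\ell=3$ subtlety of item~(\ref{base:cycle}) in the footnote by the same composition argument you describe (standard \TSAT-to-\VC\ reduction followed by Yannakakis's reduction). Your write-up is simply a more explicit rendering of what the paper leaves to citations and the footnote.
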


For any fixed graph $H$, the \HED\ problem trivially belongs to \NP. Hence, we may 
state that an \HED\ problem is \NPC\ by proving that it is \NPH.
\subsection{Basic Tools}

We introduce two constructions which will be used for the polynomial time reductions
in the upcoming sections.

\begin{construction}
  \label{con:nonadj}
  Let $(G',k, H, V')$ be an input to the construction, where $G'$ and $H$ are graphs, $k$
  is a positive integer and $V'$ is a subset of vertices of $H$.
  Label the vertices of $H$ such that every vertex get a unique label. Let the labelling be $\ell_H$.
  For every subgraph (not necessarily induced) $C$ with a vertex set $V(C)$ 
  and an edge set $E(C)$ in $G'$ such that $C$ is isomorphic to $H[V']$,
  do the following:
  \begin{itemize}
    \item Give a labelling $\ell_C$ for the vertices in $C$ such that there is an isomorphism
      $f$ between $C$ and $H[V']$ which maps every vertex $v$ in $C$ to a vertex $v'$ in $H[V']$
      such that $\ell_C(v)=\ell_H(v')$, i.e., $f(v)=v'$ if and only if $\ell_C(v)=\ell_H(v')$.
    \item Introduce $k+1$ sets of vertices $V_1, V_2,\ldots, V_{k+1}$, each of size $|V(H)\setminus V'|$.
    \item For each set $V_i$, introduce an edge set $E_i$ of size $|E(H)\setminus E(H[V'])|$ among
      $V_i\cup V(C)$
       such that there is an isomorphism $h$ between $H$ and
      $(V(C)\cup V_i, E(C)\cup E_i)$ which preserves $f$, i.e.,
      for every vertex $v\in V(C)$, $h(v)=f(v)$.
  \end{itemize}
  This completes the construction. Let the constructed graph be $G$. 
\end{construction}

An example of the construction is shown in Figure~\ref{fig:cons}.
Let $C$ be a copy of $H[V']$ in $G'$. Then, $C$ is called a \emph{base} in $G'$.
Let $\{V_i\}$ be the $k+1$ sets of vertices introduced in the construction for the base $C$.
Then, each $V_i$ is called a \emph{branch} of $C$ and the vertices in $V_i$ are called 
the \emph{branch vertices} of $C$. $C$ is called the \emph{base} of $V_i$ for $1\leq i\leq k+1$.
The vertex set of $G'$ in $G$ is denoted by $V_{G'}$.

Since $H$ is a fixed graph, the construction runs in polynomial time. 
In the construction, for every base $C$ in $G'$, we introduce new vertices
and edges such that there exist $k+1$ copies of $H$ in $G$ and $C$ is the common intersection of every pair 
of them. This enforces that every solution of an instance $(G,k)$ of \HED\ is a solution of 
an instance $(G',k)$ of \HDED, where $H'$ is $H[V']$. This is proved in the following lemma.

\begin{lemma}
  \label{lem:con:nonadj-backward}
  Let $G$ be obtained by Construction~\ref{con:nonadj} on 
  the input $(G',k,H,V')$, where $G'$ and $H$ are graphs, $k$ is a positive integer and $V'\subseteq V(H)$.
  Then, if $(G,k)$ is a yes-instance of \HED, then $(G',k)$ is a yes-instance of \HDED, where $H'$ is $H[V']$.
\end{lemma}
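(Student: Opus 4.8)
The plan is to start from a solution of $(G,k)$ and produce a solution of $(G',k)$ by simply discarding every edge incident to a branch vertex. Concretely, let $F\subseteq E(G)$ with $|F|\le k$ satisfy that $G-F$ is $H$-free, and set $F' = F\cap E(G')$, where I identify $E(G')$ with the edges of $G$ lying inside $V_{G'}$. This identification is legitimate because Construction~\ref{con:nonadj} introduces edges only incident to branch vertices, so $G[V_{G'}]=G'$. Clearly $|F'|\le|F|\le k$, hence it suffices to prove that $G'-F'$ is $H'$-free, where $H'=H[V']$.

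I would argue by contradiction. Suppose $G'-F'$ contains an induced copy of $H'$ on a vertex set $S\subseteq V(G')$, and let $E_C$ be the edge set of that copy. Then $(S,E_C)$ is a (not necessarily induced) subgraph of $G'$ isomorphic to $H[V']$, so it is precisely one of the bases over which the construction iterates. Consequently there are $k+1$ pairwise vertex-disjoint branches $V_1,\dots,V_{k+1}$ attached to it, each coming with an edge set $E_i$ (incident only to $V_i$) such that $(S\cup V_i,\,E_C\cup E_i)\cong H$.

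The counting step is the heart of the argument. Since the sets $V_1,\dots,V_{k+1}$ are pairwise disjoint and each $E_i$ is incident to $V_i$, the edge sets $E_1,\dots,E_{k+1}$ are pairwise disjoint; as $|F|\le k$, there is some index $i$ with $F\cap E_i=\varnothing$, which I fix. I would then show that $(G-F)[S\cup V_i]$ is an induced copy of $H$, contradicting the $H$-freeness of $G-F$. To do so, enumerate the edges of $G[S\cup V_i]$: the edges inside $S$ are those of $G'[S]$, namely $E_C$ together with $F_S := F'\cap[S]^2$ (the edges of $G'[S]$ deleted by $F'$); the edges incident to $V_i$ are exactly $E_i$, because branch vertices are adjacent only through the edges introduced for them. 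Thus $E(G[S\cup V_i]) = E_C\,\cup\, F_S\,\cup\, E_i$. Deleting $F$ now removes every edge of $F_S$ (since $F_S\subseteq F'\subseteq F$), leaves $E_C$ intact (as $E_C\subseteq E(G')$ and $E_C\cap F'=\varnothing$ give $E_C\cap F=\varnothing$), and leaves $E_i$ intact (by the choice of $i$). Hence $(G-F)[S\cup V_i]$ has edge set exactly $E_C\cup E_i$, an induced copy of $H$ — the desired contradiction.

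The main obstacle, and the step I would treat most carefully, is this last one: guaranteeing that the surviving subgraph on $S\cup V_i$ is an \emph{induced} copy of $H$ and not an $H$ with extra chords. This is exactly why each base is completed to $H$ in $k+1$ independent ways and why $F'$ is defined to delete \emph{all} of $F$'s edges inside $V_{G'}$: the chords within $S$ that could spoil the induced copy are precisely the edges $F_S$, which are forced into $F'$, while the pigeonhole bound secures a branch whose completing edges $E_i$ remain untouched. I would also dispatch the trivial boundary case $V'=V(H)$, where $H'=H$, $G=G'$, and the claim is immediate.
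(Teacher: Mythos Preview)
The proposal is correct and follows essentially the same approach as the paper: both take a size-$\le k$ solution $F$ for $(G,k)$, restrict it to $E(G')$, assume an induced $H'$ survives in $G'$ on some vertex set, identify that set with a base of the construction, and use pigeonhole on the $k+1$ branches to exhibit an induced $H$ in $G-F$. Your write-up is actually more careful than the paper's in distinguishing $F$ from $F'=F\cap E(G')$ and in verifying that the surviving subgraph on $S\cup V_i$ is \emph{induced} (i.e., that the extra chords $F_S$ inside $S$ are removed while $E_i$ is preserved); the paper's proof glosses over this point.
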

\begin{proof}
  Let $F$ be a solution of size at most $k$ of $(G,k)$. For a contradiction, assume that $G'-F$
  has an induced $H'$ with a vertex set $U$. Hence there is a base $C$ in $G'$ isomorphic to
  $H'$ with the vertex set $V(C)=U$. Since there are $k+1$ copies of $H$ in $G$, where each pair
  of copies of $H$ has the intersection $C$, and $|F|\leq k$, deleting $F$ cannot kill all the copies of 
  $H$ associated with $C$. Therefore,
  since $U$ induces an $H'$ in $G'-F$, there exists a branch $V_i$ of $C$ such that $U\cup V_i$
  induces $H$ in $G-F$, which is a contradiction.\qed
\end{proof}

\begin{figure}[h]
\centering
\subfloat[Subfigure 1 list of figures text][$G'$]{
\includegraphics[width=0.65in]{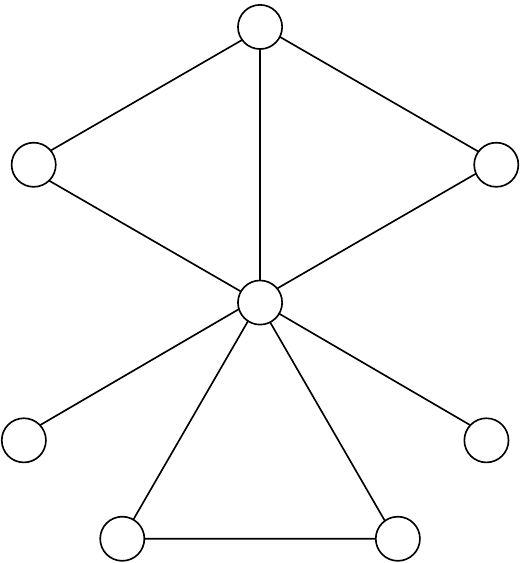}
\label{fig:con1g}}
\qquad
\subfloat[Subfigure 2 list of figures text][$H$. The vertices in $V'$ are blackened.]{
\includegraphics[width=0.65in]{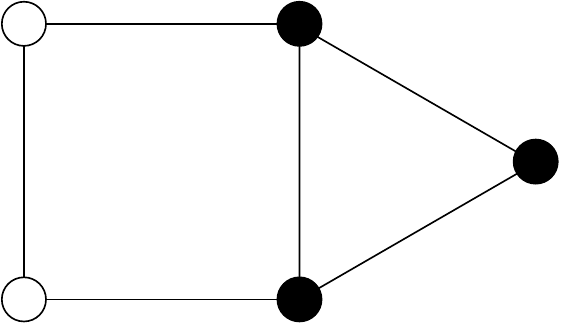}
\label{fig:con1h}}
\qquad
\subfloat[Subfigure 2 list of figures text][Output of Construction~\ref{con:nonadj} with an input $(G',k=2,H,V')$.]{
\includegraphics[width=1.15in]{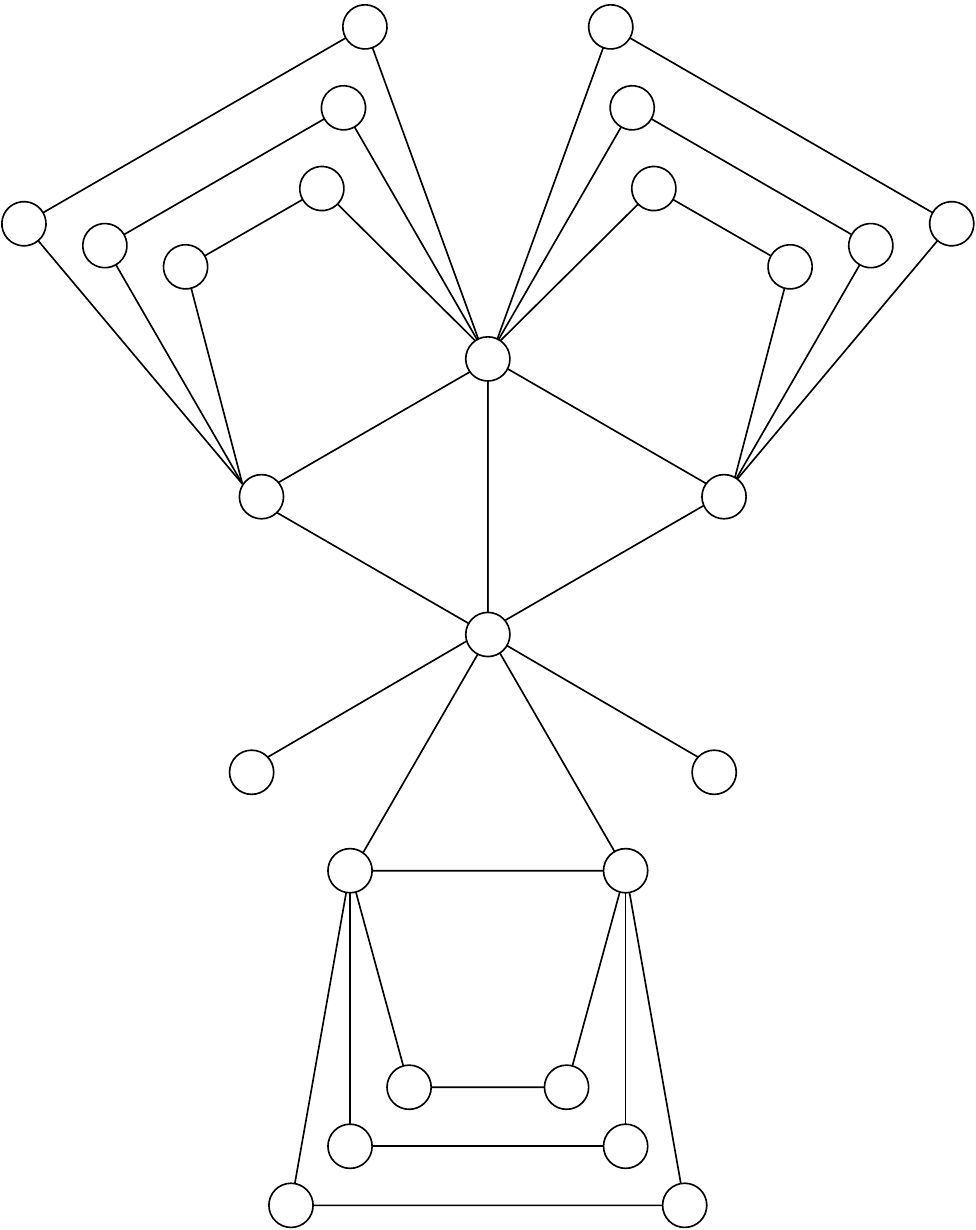}
\label{fig:constructed}}
\qquad
\subfloat[Subfigure 2 list of figures text][Output of Construction~\ref{con:adj} with an input $(G',k=2)$.]{
\includegraphics[width=1.15in]{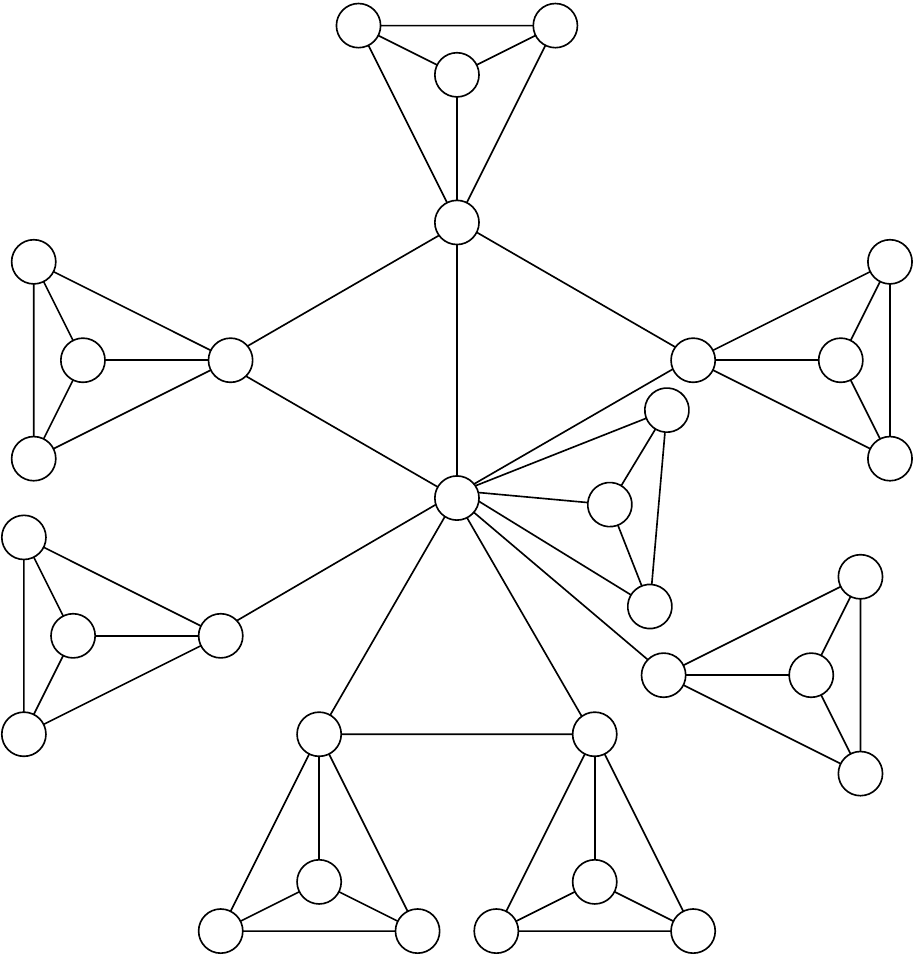}
\label{fig:constructed2}}
\caption{Examples showing Construction~\ref{con:nonadj} and Construction~\ref{con:adj}.}
\label{fig:cons}
\end{figure}

Now we introduce a simple construction, which is used in the next section. 
This construction attaches a clique of $k+1$ vertices to each vertex in the input
graph of the construction.

\begin{construction}
  \label{con:adj}
  Let $(G',k)$ be an input to the construction, where $G'$ is a graph and $k$
  is a positive integer.
  For every vertex $v_i$ in $G'$, introduce a set of $k+1$ vertices $V_i$ 
  and make every pair of vertices in $V_i\cup \{v_i\}$ adjacent.
  This completes the construction. Let the resultant graph be $G$.
\end{construction}

An example of the construction is shown in Figure~\ref{fig:cons}.
Here, we call all the newly introduced vertices as \emph{branch vertices}.
\section{\TED}
\label{sec:tree}

Let $T$ be any tree with at least two edges. We use induction on the diameter of $T$ to prove that \TED\ is \NPC. 
The base cases are when $\diam(T)=2$ or 3.  
To prove the base cases, we use polynomial time reductions from
\PTED\ and \PFED. For any $T$ with $\diam(T)>3$, we give polynomial time reduction from \TDED\ to \TED, where
$T'$ is a subtree of $T$ such that $\diam(T')=\diam(T)-2$. To prove each of the base cases,
we apply induction on the number of leaf vertices. All our reductions are linear parameterized reductions
and hence from the non-existence of parameterized subexponential algorithms for \PTED\ and \PFED\ , 
we obtain that there exists no parameterized subexponential time algorithm for \TED, unless ETH fails.

\subsection{Base Cases}
As mentioned above, the base cases are when $\diam(T)=2$ or 3. 
By $\ell(T)$, we denote the number of leaf vertices of $T$. 
We call the vertices in $T$ with degree one as \emph{leaf vertices} and the vertices with
degree more than one as \emph{internal vertices}.
If $\diam(T)=2$ and $\ell(T)=\ell\geq 2$, we denote $T$ by $S_\ell$, the star graph on $\ell+1$ vertices.

For every pair of non-negative integers $\ell_1$ and $\ell_2$ such that $\ell_1+\ell_2\geq 1$,
we define a tree denoted by $S_{\ell_1,\ell_2}$ as follows: the vertex set $V$ of $S_{\ell_1,\ell_2}$
has $\ell_1+\ell_2+2$ vertices with two designated adjacent vertices $r_1$ and $r_2$ such that 
$r_1$ is adjacent to $\ell_1$ number of leaf vertices in $V\setminus \{r_2\}$
and $r_2$ is adjacent to $\ell_2$ number of leaf vertices in $V\setminus \{r_1\}$. 
We call such a tree as a \emph{twin-star} graph.
We note that $S_{\ell_1,0}$ is the star graph $S_{\ell_1+1}$ and that $S_{\ell_1,\ell_2}$ and 
$S_{\ell_2,\ell_1}$ are isomorphic.

\begin{figure}[h]
\centering
\subfloat[Subfigure 1 list of figures text][$S_6$]{
\includegraphics[width=0.1\textwidth]{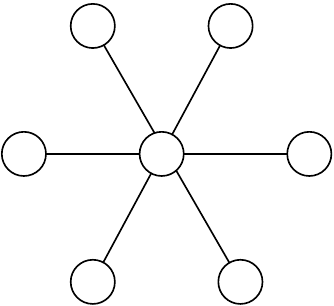}
\label{fig:subfig5}}
\qquad
\subfloat[Subfigure 2 list of figures text][$S_{5,2}$]{
\includegraphics[width=0.15\textwidth]{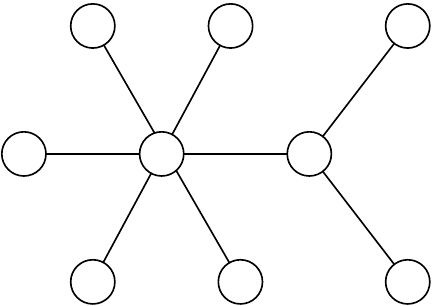}
\label{fig:subfig6}}
\caption{A star graph and a twin-star graph}
\label{fig:starfigs}
\end{figure}

\begin{lemma}
  \label{lem:sl-reduction}
  Let $\ell>2$. Then, there is a linear parameterized reduction from \SLOED\ 
  to \SLED.
\end{lemma}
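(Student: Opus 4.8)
The plan is to realize the reduction through Construction~\ref{con:adj}. Given an instance $(G',k)$ of \SLOED{}, I would run Construction~\ref{con:adj} on $(G',k)$ to obtain $G$ and output $(G,k)$ as an instance of \SLED{}. Since the parameter is copied verbatim, this is trivially a linear parameterized reduction, and the construction runs in polynomial time. Recall that in $G$ every vertex $v_i$ of $G'$ acquires a fresh set $V_i$ of $k+1$ branch vertices with $\{v_i\}\cup V_i$ a clique, that no new edge is added inside $V(G')$ (so $G[V(G')]=G'$), and that each branch vertex is adjacent only to the vertices of its own clique. I then have to establish that $(G',k)$ is a yes-instance of \SLOED{} if and only if $(G,k)$ is a yes-instance of \SLED{}.

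For the forward implication I would show that any solution $F'$ of $(G',k)$ is already a solution of $(G,k)$. As $F'\subseteq E(G')$ it touches no clique edge, so every branch clique survives intact in $G-F'$; a branch vertex therefore still has a clique as its neighbourhood and cannot be the centre of an induced $S_\ell$, since a clique has independence number $1<\ell$. For a prospective centre $v_i\in V(G')$, its neighbourhood in $G-F'$ is $N_{G'-F'}(v_i)\cup V_i$, and any independent set of leaves contains at most one vertex of the clique $V_i$; because $G'-F'$ is $S_{\ell-1}$-free, the set $N_{G'-F'}(v_i)$ has independence number at most $\ell-2$. Hence $v_i$ has at most $\ell-1$ pairwise non-adjacent neighbours and is not the centre of an induced $S_\ell$, so $G-F'$ is $S_\ell$-free.

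For the converse I would take a solution $F$ of $(G,k)$ and set $F^\ast=F\cap E(G')$, which has size at most $k$. Suppose, for contradiction, that $G'-F^\ast$ contains an induced $S_{\ell-1}$ with centre $c$ and pairwise non-adjacent leaves $u_1,\dots,u_{\ell-1}$. Of the $k+1$ edges joining $c$ to its clique $V_c$, at most $k$ lie in $F$, so some branch vertex $q\in V_c$ is still adjacent to $c$ in $G-F$; moreover $q$ is non-adjacent to every $u_j$, as branch vertices see only their own clique. Consequently $\{c,u_1,\dots,u_{\ell-1},q\}$ induces an $S_\ell$ in $G-F$, contradicting that $F$ is a solution. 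Hence $G'-F^\ast$ is $S_{\ell-1}$-free and $(G',k)$ is a yes-instance.

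The delicate point, and the reason I would use the clique gadget of Construction~\ref{con:adj} rather than pendant-style attachments (Construction~\ref{con:nonadj}), lies in the forward direction: hanging $k+1$ \emph{independent} pendants off a prospective centre would itself create an induced $S_\ell$ (indeed an $S_{k+1}$) that no $k$-edge deletion restricted to $G'$ could destroy, wrecking soundness. Making the $k+1$ new vertices a clique contributes \emph{exactly one} usable independent leaf while being immune to deletions among themselves, which is precisely what turns an $S_{\ell-1}$-centre into an $S_\ell$-centre and back; the size $k+1$ is what guarantees in the converse that one clique leaf always survives $F$. Finally, the hypothesis $\ell>2$ is exactly what makes $S_{\ell-1}$ a genuine star with at least two leaves, so that \SLOED{} is the well-defined predecessor problem.
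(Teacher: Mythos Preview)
Your proof is correct and follows essentially the same approach as the paper: you apply Construction~\ref{con:adj}, keep the parameter unchanged, and prove both directions by the same case analysis on whether the centre of a putative star is a branch vertex or lies in $V(G')$. The only cosmetic difference is that you phrase the forward direction via the independence number of the centre's neighbourhood, whereas the paper argues directly that removing the (at most one) branch leaf from an induced $S_\ell$ leaves an induced $S_{\ell-1}$ inside $G'-F'$; the two arguments are equivalent.
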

\begin{proof}
  Let $(G',k)$ be an instance of \SLOED. 
  Apply Construction~\ref{con:adj} on $(G',k)$
  to obtain $G$. We claim that $(G',k)$ is a yes-instance of \SLOED\ if and only if
  $(G,k)$ is a yes-instance of \SLED.

  Let $(G',k)$ be a yes-instance of \SLOED. Let $F'$ be a solution of size at most $k$
  of $(G',k)$. For a contradiction, assume that $G-F'$ has an induced $S_{\ell}$
  with a vertex set $U$. Let $r$ be the internal vertex of the $S_\ell$ induced by $U$ in $G-F'$.
  Now there are two cases and in both the cases we obtain contradictions.
  \begin{itemize}
    \item $r$ is a branch vertex: 
      Since the neighborhood of any branch vertex in $G-F'$ is a clique, $r$ cannot be
      the internal vertex, which is a contradiction.
    \item $r$ is a vertex in $V_{G'}$:
      Since the branch vertices in the neighborhood of $r$ in $G-F'$
      induce a clique, at most one branch neighbor $u$ of $r$ is present in $U$ (as a leaf vertex).
      Hence, the remaining leaf vertices of the $S_\ell$ induced by $U$ in $G-F'$ belong to $V_{G'}$.
      This implies that $U\setminus \{u\}$ induces $S_{\ell-1}$ in $G'-F'$, which is a contradiction.
  \end{itemize}

  Conversely, let $(G,k)$ be a yes-instance of \SLED. Let $F$ be a solution of size at most $k$ of
  $(G,k)$. For a contradiction, assume that $G'-F$ has an induced $S_{\ell-1}$ with a vertex set $U$.
  Let $r$ be the internal vertex of $S_{\ell-1}$ induced by $U$ in $G'-F$. 
  Since $|F|\leq k$ and $k+1$ branch vertices are adjacent to $r$ in $G$, 
  there is at least one branch vertex $u$ adjacent to $r$ in $G-F$. 
  Hence, $U\cup \{u\}$ induces an $S_{\ell}$ in $G-F$, which is a contradiction.\qed
\end{proof}

\begin{theorem}
  \label{thm:star}
  For every integer $\ell\geq 2$, \SLED\ is \NPC. Furthermore, \SLED\ is not solvable in time
  $2^{o(k)}\cdot |G|^{O(1)}$, unless ETH fails.
\end{theorem}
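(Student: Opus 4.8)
The plan is to prove Theorem~\ref{thm:star} by induction on $\ell\geq 2$, using Lemma~\ref{lem:sl-reduction} as the inductive step. First I would establish the base case $\ell=2$. Note that $S_2$ is the path $P_3$ (the star with two leaves is exactly a path on three vertices, whose internal vertex is the center). Hence \SLED\ for $\ell=2$ is precisely \PTED, which by Proposition~\ref{pro:bases}(\ref{base:pted}) is \NPC\ and cannot be solved in time $2^{o(k)}\cdot|G|^{O(1)}$ unless ETH fails. This handles the base of the induction directly from the known base problem, with no further work required.

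For the inductive step, suppose the statement holds for some $\ell-1$ with $\ell-1\geq 2$, i.e.\ $\ell>2$; I would show it holds for $\ell$. By Lemma~\ref{lem:sl-reduction} there is a linear parameterized reduction from \SLOED\ to \SLED. Since \HED\ for any fixed $H$ lies in \NP\ (as remarked after Proposition~\ref{pro:bases}), membership in \NP\ is automatic, so \NP-completeness of \SLED\ follows from \NP-hardness, which the polynomial-time reduction transfers from the \NP-hard problem \SLOED. For the parameterized lower bound, I would invoke Proposition~\ref{pro:lpr}: because the reduction of Lemma~\ref{lem:sl-reduction} is a \emph{linear} parameterized reduction and, by the induction hypothesis, \SLOED\ admits no parameterized subexponential time algorithm, it follows that \SLED\ admits no parameterized subexponential time algorithm either, unless ETH fails.

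Chaining these two ingredients completes the induction: the base case $\ell=2$ is \PTED, and each step lifts both the \NPC\ status and the ETH-based lower bound from $S_{\ell-1}$ to $S_\ell$. I would phrase the conclusion as: for every $\ell\geq 2$, \SLED\ is \NPC\ and is not solvable in time $2^{o(k)}\cdot|G|^{O(1)}$ unless ETH fails.

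The main obstacle is almost entirely packaged into Lemma~\ref{lem:sl-reduction}, which is already proved, so the theorem itself is a short assembly. The only genuine subtlety to verify is the base-case identification $S_2=P_3$ and the bookkeeping that Lemma~\ref{lem:sl-reduction} applies for every $\ell>2$, thereby reaching all $\ell\geq 3$ from $\ell=2$; I would state this explicitly to make the induction airtight. One should also confirm that the linearity of the parameter blow-up is preserved (it is, since the construction sets $k'=k$), so that Proposition~\ref{pro:lpr} is applicable at each step without the parameter growing across iterations.
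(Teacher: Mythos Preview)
Your proposal is correct and follows essentially the same approach as the paper: induction on $\ell$ with base case $S_2=P_3$ handled by Proposition~\ref{pro:bases}(\ref{base:pted}) and the inductive step supplied by Lemma~\ref{lem:sl-reduction}. Your write-up is merely more explicit about invoking Proposition~\ref{pro:lpr} and \NP-membership, but the argument is the same.
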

\begin{proof}
  The proof is by induction on $\ell$. When $\ell=2$, $S_{\ell}$ is the graph $P_3$.
  Hence, Proposition~\ref{pro:bases}(\ref{base:pted}) proves this case.
  Assume that the statements are true for \SLOED, if $\ell-1\geq 2$. 
  Now the statements follow from Lemma~\ref{lem:sl-reduction}.\qed
\end{proof}

We apply a similar technique to prove the NP-completeness and parameterized lower bound for \TED\ 
when $\diam(T)=3$. As described before, we denote these graphs by $S_{\ell_1,\ell_2}$,
the twin-star graph having $\ell_1\geq 1$ leaf vertices adjacent to an internal vertex $r_1$ and $\ell_2\geq 1$ leaf
vertices adjacent to another internal vertex $r_2$. 

\begin{lemma}
  \label{lem:sll-reduction}
  For any pair of integers $\ell_1$ and $\ell_2$ such that $\ell_1,\ell_2\geq 1$ and $\ell_1+\ell_2\geq 3$, there
  is a linear parameterized reduction from \SLOLOED\ to \SLLED.
\end{lemma}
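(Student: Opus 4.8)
The plan is to reuse the gadget of Lemma~\ref{lem:sl-reduction}, but now feeding \emph{two} centres. Given an instance $(G',k)$ of \SLOLOED, I apply Construction~\ref{con:adj} to $(G',k)$, attaching to every vertex $v$ of $G'$ a fresh clique $V_v$ of $k+1$ branch vertices, and I output the resulting $(G,k)$ as an instance of \SLLED. The construction is polynomial and keeps $k$ unchanged, so it suffices to prove that $(G',k)$ is a yes-instance of \SLOLOED\ iff $(G,k)$ is a yes-instance of \SLLED; the hypothesis $\ell_1+\ell_2\ge 3$ (with $\ell_1,\ell_2\ge 1$) is exactly what makes $S_{\ell_1-1,\ell_2-1}$ a genuine graph, so the source problem is well defined.

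For the forward direction I take a solution $F'\subseteq E(G')$ of $(G',k)$ and claim it also solves $(G,k)$. Supposing otherwise, $G-F'$ has an induced $S_{\ell_1,\ell_2}$ on a set $U$, with internal vertices $r_1$ (carrying $\ell_1$ leaves) and $r_2$ (carrying $\ell_2$ leaves). Since $F'$ deletes no branch edges, every branch vertex keeps a clique neighbourhood in $G-F'$; but each $r_i$ has, among its neighbours in the induced twin-star, a leaf together with the opposite centre $r_{3-i}$, and these two are non-adjacent, so $N_{G-F'}(r_i)$ is not a clique. Hence neither centre is a branch vertex, i.e.\ $r_1,r_2\in V_{G'}$; I note this argument survives the degenerate case $\ell_i=1$, where $r_i$ still has its single leaf and $r_{3-i}$ as a non-adjacent pair. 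Because the branch neighbours of $r_i$ all lie in the single clique $V_{r_i}$ while the leaves of $r_i$ are pairwise non-adjacent, at most one leaf of each $r_i$ is a branch vertex. Deleting these (at most two) branch leaves leaves an induced $S_{\ell_1-b_1,\ell_2-b_2}$ with $b_1,b_2\in\{0,1\}$ whose vertices all lie in $V_{G'}$, and this contains an induced $S_{\ell_1-1,\ell_2-1}$. As the edges among $V_{G'}$ in $G-F'$ are precisely those of $G'-F'$, this contradicts $F'$ being a solution of \SLOLOED.

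For the converse I take a solution $F$ of $(G,k)$ and argue it solves $(G',k)$. If not, $G'-F$ has an induced $S_{\ell_1-1,\ell_2-1}$ on some $U\subseteq V_{G'}$ with centres $r_1,r_2$. Each $r_i$ is joined in $G$ to the $k+1$ branch vertices of its own clique $V_{r_i}$; since $|F|\le k$, at least one branch edge $r_i u_i$ survives at each centre. As $V_{r_1}$ and $V_{r_2}$ are disjoint cliques, $u_1\neq u_2$, $u_1u_2\notin E(G)$, and each $u_i$ is adjacent in $G-F$ only to $r_i$ among $U\cup\{u_1,u_2\}$; moreover $u_1,u_2\notin U$. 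Therefore $U\cup\{u_1,u_2\}$ induces $S_{\ell_1,\ell_2}$ in $G-F$, a contradiction.

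The genuinely new difficulty over the single-centre Lemma~\ref{lem:sl-reduction} is that two centres must be serviced at once, so the step I expect to require the most care is the converse: I must guarantee that the two added branch leaves are distinct and non-adjacent and that a deletion spent at one centre cannot block the other. Both follow from the two $(k+1)$-edge bundles living in disjoint cliques $V_{r_1}$, $V_{r_2}$, so the $|F|\le k$ budget fails to exhaust either bundle. The remaining subtlety is the forward-direction check that a centre carrying only one leaf (the $\ell_i=1$ case) still cannot be a branch vertex.
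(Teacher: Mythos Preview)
Your proof is correct and follows the paper's approach exactly: apply Construction~\ref{con:adj} and argue both directions by contradiction, ruling out branch vertices as centres via their clique neighbourhoods and using the $k{+}1$ pendant cliques to grow an $S_{\ell_1-1,\ell_2-1}$ into $S_{\ell_1,\ell_2}$. If anything, you are slightly more careful than the paper---you explicitly track the $b_1,b_2\in\{0,1\}$ cases in the forward direction and verify $u_1\neq u_2$, $u_1u_2\notin E(G)$ in the converse---whereas the paper leaves these routine checks implicit.
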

\begin{proof}
  Let $(G',k)$ be an instance of \SLOLOED. 
  Apply Construction~\ref{con:adj} on $(G',k)$ to obtain $G$.
  We claim that 
  $(G',k)$ is a yes-instance of \SLOLOED\ if and only if $(G,k)$ is a yes-instance of \SLLED.

  Let $(G',k)$ be a yes-instance of \SLOLOED. Let $F'$ be a solution of size at most $k$
  of $(G',k)$. For a contradiction, assume that $G-F'$ has an induced copy of $S_{\ell_1,\ell_2}$ with a 
  vertex set $U$. Let $r_1$ and $r_2$ be the two internal vertices of the $S_{\ell_1, \ell_2}$ induced by $U$ in $G-F'$.
  Now, there are the following cases and in each case, we obtain a contradiction.
  \begin{itemize}
  \item Either $r_1$ or $r_2$ is a branch vertex: This is not possible as the neighborhood
    of every branch vertex induces a clique in $G-F'$.
  \item Both $r_1$ and $r_2$ are in $V_{G'}$:
    Since the branch vertices adjacent to $r_1$ forms a clique in $G-F'$, at most one branch 
    vertex $u_1$ can be a leaf vertex adjacent to $r_1$ in the $S_{\ell_1,\ell_2}$ induced by $U$ in $G-F'$.
    Similarly, at most one branch vertex $u_2$ can be a leaf vertex adjacent to $r_2$
    in the $S_{\ell_1,\ell_2}$ induced by $U$ in $G-F'$.
    The remaining vertices of $U$ belong to $V_{G'}$. 
    Hence $U\setminus \{u_1,u_2\}$ induces $S_{\ell_1-1,\ell_2-1}$ in $G'-F'$, which is a contradiction.
  \end{itemize}

  Conversely, let $(G,k)$ be a yes-instance of \SLLED. Let $F$ be a solution of size at most $k$ of
  $(G,k)$. For a contradiction, assume that $G'-F$ has an induced $S_{\ell_1-1,\ell_2-1}$ with a vertex
  set $U$. Since $\ell_1+\ell_2\geq 3$, there exists at least one internal vertex, say $r_1$, 
  in the $S_{\ell_1-1,\ell_2-1}$
  induced by $U$ in $G'-F$. If there is no other internal vertex $r_2$ in the $S_{\ell_1-1,\ell_2-1}$,
  then let $r_2$ be any leaf vertex of the $S_{\ell_1-1,\ell_2-1}$.
  Let $V_1$ and $V_2$ be the set of branch vertices introduced in the construction such that 
  every vertex in $V_1$ is adjacent to $r_1$ and every vertex in $V_2$ is adjacent to $r_2$.
  Since $|F|\leq k$ and $|V_1|,|V_2|=k+1$, 
  there exist a vertex $v_1\in V_1$ adjacent to $r_1$ and a vertex $v_2\in V_2$
  adjacent to $r_2$ in $G-F$. Hence, $U\cup\{v_1,v_2\}$ induces an $S_{\ell_1, \ell_2}$ in $G-F$,
  which is a contradiction.\qed
\end{proof}

\begin{theorem}
  \label{thm:twin-star}
  For every pair of integers $\ell_1$ and $\ell_2$ such that $\ell_1,\ell_2\geq 0$ and $\ell_1+\ell_2\geq 1$, 
  \SLLED\ is \NPC\ and
  \SLLED\ is not solvable in time
  $2^{o(k)}\cdot |G|^{O(1)}$, unless ETH fails.
\end{theorem}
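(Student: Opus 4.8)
The plan is to derive the statement from Theorem~\ref{thm:star} and Lemma~\ref{lem:sll-reduction} by a short case analysis combined with an induction on $\ell_1+\ell_2$. Since \SLLED\ lies in \NP\ for every fixed pair $(\ell_1,\ell_2)$, it suffices to show that the problem is \NPH\ and admits no $2^{o(k)}\cdot|G|^{O(1)}$ algorithm under ETH; because every reduction involved is a linear parameterized reduction, Proposition~\ref{pro:lpr} will transport the ETH-based lower bound along each step, while the reduction simultaneously witnesses hardness. Two structural observations drive the argument: first, whenever $\min(\ell_1,\ell_2)=0$ the graph $S_{\ell_1,\ell_2}$ is exactly the star $S_{\max(\ell_1,\ell_2)+1}$; and second, $S_{1,1}$ is precisely the path $P_4$. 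These let me anchor the induction on problems already settled in the paper.

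First I would dispose of the degenerate case $\min(\ell_1,\ell_2)=0$. Here $\ell_1+\ell_2\geq 1$ forces $\max(\ell_1,\ell_2)\geq 1$, so $S_{\ell_1,\ell_2}=S_{\max(\ell_1,\ell_2)+1}$ is a star with at least two leaves, and Theorem~\ref{thm:star} immediately gives both the hardness and the lower bound. For the remaining case $\ell_1,\ell_2\geq 1$ I would induct on $s=\ell_1+\ell_2$. The base case is $s=2$, i.e.\ $\ell_1=\ell_2=1$, where $S_{1,1}=P_4$ and the claim is exactly Proposition~\ref{pro:bases}(\ref{base:pfed}).

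For the inductive step I assume $\ell_1,\ell_2\geq 1$ and $s=\ell_1+\ell_2\geq 3$, so that Lemma~\ref{lem:sll-reduction} applies and yields a linear parameterized reduction from \SLOLOED\ to \SLLED. The point is that the source problem of this reduction is always one already settled: if one of $\ell_1-1,\ell_2-1$ equals $0$, then (since $(\ell_1-1)+(\ell_2-1)=s-2\geq 1$) the forbidden graph $S_{\ell_1-1,\ell_2-1}$ is a star and falls under the degenerate case just treated; otherwise both indices are positive with sum $s-2<s$, covered by the induction hypothesis. In either situation \SLOLOED\ is \NPC\ and has no $2^{o(k)}\cdot|G|^{O(1)}$ algorithm unless ETH fails, so Proposition~\ref{pro:lpr} carries both conclusions to \SLLED. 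The only real care needed is the boundary bookkeeping: Lemma~\ref{lem:sll-reduction} is unavailable exactly when $s\leq 2$ or one index is $0$, which is precisely why $P_4$ and the star family must be pulled in as separate anchors rather than reached by the reduction. I do not expect any genuinely hard step beyond laying out this case split cleanly and verifying that the decremented instance never escapes the established cases.
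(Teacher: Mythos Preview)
Your proposal is correct and follows essentially the same approach as the paper: both anchor on Theorem~\ref{thm:star} for the case where one index is zero, on Proposition~\ref{pro:bases}(\ref{base:pfed}) for $S_{1,1}=P_4$, and then invoke Lemma~\ref{lem:sll-reduction} inductively on $\ell_1+\ell_2$. The only cosmetic difference is that you peel off the $\min(\ell_1,\ell_2)=0$ case before starting the induction, whereas the paper lists it as a base case inside the induction; the logical content is identical.
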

\begin{proof}
  The proof is by induction on $\ell_1+\ell_2$. The base cases are:
  \begin{itemize}
  \item $\ell_1=0$ ($\ell_2=0$): This is the case when the tree is $S_{\ell_2+1}$ ($S_{\ell_1+1}$),
    the case handled by Theorem~\ref{thm:star}.
  \item $\ell_1=\ell_2=1$: Here the tree is a $P_4$ and hence the statements follow 
    from Proposition~\ref{pro:bases}(\ref{base:pfed}).
  \end{itemize}  
  Assume that the statements holds true for the integers $\ell_1-1, \ell_2-1$ such 
  that $\ell_1-1,\ell_2-1\geq 0$ and $(\ell_1-1)+(\ell_2-1)\geq 1$.
  Now, the statements follow from Lemma~\ref{lem:sll-reduction}.\qed
\end{proof}

\subsection{Induction}
In the previous subsection, we proved the base cases of the inductive proof for the NP-completeness
and parameterized lower bound of \TED. The base cases were $\diam(T)=2$ (star graph) and $\diam(T)=3$ (twin-star graph).
Before concluding the proof, we give a lemma which is stronger than what we require and the further implications of this 
lemma will be discussed in the concluding section.

\begin{lemma}
  \label{lem:degree-reduction}
  Let $H$ be any graph and $d$ be any integer. Let $V'$ be the set of all
  vertices in $H$ with degree more than $d$. Let $H'$ be $H[V']$. Then, there is a
  linear parameterized reduction from
  \HDED\ to \HED.
\end{lemma}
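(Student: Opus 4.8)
The plan is to reduce via Construction~\ref{con:nonadj}: given an instance $(G',k)$ of \HDED, I would run the construction on the input $(G',k,H,V')$ to obtain $G$ and output $(G,k)$. Since $H$ is fixed, enumerating the subgraphs of $G'$ isomorphic to $H[V']$ and attaching $k+1$ branches to each base runs in polynomial time, and the parameter is left unchanged, so the map is a linear parameterized reduction. By Proposition~\ref{pro:lpr} it then suffices to establish that $(G',k)$ is a yes-instance of \HDED\ if and only if $(G,k)$ is a yes-instance of \HED.

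First I would dispose of the easy direction: if $(G,k)$ is a yes-instance of \HED, then Lemma~\ref{lem:con:nonadj-backward}, which holds for every $H$ and every $V'\subseteq V(H)$, immediately gives that $(G',k)$ is a yes-instance of \HDED. So the whole weight of the argument falls on the converse, and this is exactly where I would use that $V'$ is the set of vertices of degree more than $d$ (equivalently, that $V(H)\setminus V'$ consists of vertices of degree at most $d$).

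For the converse, I would take a solution $F'$ of size at most $k$ for $(G',k)$ and argue that the same edge set solves $(G,k)$. Assuming for contradiction an induced copy of $H$ in $G-F'$ on a vertex set $W$ with isomorphism $\phi$, the central step is a degree bound on branch vertices: a branch vertex of $G$ is adjacent only to vertices of its own branch and of its base, and under the construction's isomorphism it plays the role of a vertex of $V(H)\setminus V'$, so its degree in $G$ equals that vertex's degree in $H$, which is at most $d$. In contrast, for $v\in V'$ the image $\phi(v)$ has $\deg_H(v)>d$ neighbours inside $W$, so its degree in $G-F'$, and hence in $G$, exceeds $d$; therefore $\phi(v)$ cannot be a branch vertex. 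This forces $\phi(V')\subseteq V_{G'}$.

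The last step is to push this back to $G'$: since Construction~\ref{con:nonadj} adds no edges among the vertices of $V_{G'}$ and keeps $G'$ intact there, the subgraph induced by $\phi(V')$ is the same in $G$ and in $G'$, and $F'\subseteq E(G')$; consequently $\phi$ restricted to $V'$ exhibits an induced $H'=H[V']$ in $G'-F'$, contradicting the choice of $F'$. I expect this converse direction to be the main obstacle, since one must exclude ``mixed'' copies of $H$ in $G$ that would place branch vertices in the high-degree positions of $H$; the degree threshold in the definition of $V'$ is precisely the gap that rules these out.
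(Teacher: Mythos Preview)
Your proposal is correct and follows essentially the same route as the paper: apply Construction~\ref{con:nonadj} on $(G',k,H,V')$, use Lemma~\ref{lem:con:nonadj-backward} for one direction, and for the other observe that every branch vertex has degree at most $d$ in $G$ so the high-degree part of any induced $H$ in $G-F'$ lies in $V_{G'}$ and hence gives an induced $H'$ in $G'-F'$. The only cosmetic remark is that the appeal to Proposition~\ref{pro:lpr} is superfluous here; the lemma only asserts the existence of a linear parameterized reduction, and for that it suffices to verify correctness and that $k$ is unchanged.
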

\begin{proof}
  Let $(G',k)$ be an instance of \HDED. Obtain $G$ by applying 
  Construction~\ref{con:nonadj} on $(G',k,H,V')$. We claim that 
  $(G',k)$ is a yes-instance of \HDED\ if and only if 
  $(G,k)$ is a yes-instance of \HED.

  Let $(G',k)$ be a yes-instance of \HDED. Let $F'$ be a 
  solution of size at most $k$ of $(G',k)$. For a contradiction,
  assume that $G-F'$ has an induced $H$ with a vertex set $U$.
  Let $U'$ be the set of all vertices in $U$ such that every
  vertex in $U'$ has degree more than $d$ in $(G-F')[U]$.
  Since every branch vertex in $G$ has degree at most $d$,
  every vertex in $U'$
  must be in $V_{G'}$. Hence $U'$ induces an $H'$ in $G'-F'$,
  which is a contradiction.
  Lemma~\ref{lem:con:nonadj-backward} proves the converse.\qed
\end{proof}

Corollary~\ref{cor:tree-reduction} is obtained by invoking Lemma~\ref{lem:degree-reduction} with $H=T$ and $d=1$.

\begin{corollary}
  \label{cor:tree-reduction}
  Let $T$ be any tree with $\diam(T)>3$. 
  Let $T'$ be obtained from $T$ by deleting all leaf vertices. Then, there exists a 
  linear parameterized reduction from \TDED\ to \TED. 
\end{corollary}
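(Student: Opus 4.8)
The plan is to derive the corollary as an immediate specialization of Lemma~\ref{lem:degree-reduction}, taking $H=T$ and $d=1$. With this choice $V'$ is the set of vertices of $T$ whose degree exceeds $1$, and since $T$ is a tree these are precisely the non-leaf (internal) vertices of $T$. The lemma then supplies a linear parameterized reduction from \HDED\ to \HED\ with $H'=T[V']$, so the entire task collapses to checking that the graph $H'=T[V']$ produced by the lemma coincides with the graph $T'$ named in the corollary, namely the one obtained from $T$ by deleting all leaf vertices.

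First I would settle this identification. Deleting all leaf vertices of $T$ means removing those vertices together with their incident edges, which by definition yields the subgraph induced on the remaining vertices, i.e.\ on $V(T)$ with the leaves removed. As the leaves are exactly the vertices of degree $1$, this remaining vertex set is precisely $V'$, so the result of deleting all leaves is $T[V']=H'$. Hence $T'=H'$, and the reduction promised by the lemma is exactly a reduction from \TDED\ to \TED.

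The only point that genuinely needs an argument is that $T'$ deserves its name: that it is again a tree, so that \TDED\ is a legitimate instance of the tree problem and the outer induction on $\diam$ can be continued. Acyclicity is inherited from $T$, so the real work is to show that $T[V']$ is connected. For this I would take any two internal vertices $u$ and $v$ and consider the unique path $P$ between them in $T$; every intermediate vertex of $P$ has two neighbors along $P$ and therefore has degree at least $2$ in $T$, placing it in $V'$. Thus $P$ lies entirely inside $T[V']$, which is consequently connected. I expect this short connectivity argument to be the main (and essentially only) obstacle; once it is in place the corollary follows from the lemma verbatim.

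Finally, for use by the induction I would record that the diameter drops by exactly two. A longest induced path of $T$ has leaf endpoints, and deleting precisely these two endpoints leaves a longest induced path of $T'$, so $\diam(T')=\diam(T)-2$. Since $\diam(T)>3$ forces $\diam(T')\geq 2$, the tree $T'$ still has at least two edges, which is exactly what is needed for the inductive hypothesis to apply to $T'$.
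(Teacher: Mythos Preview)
Your proposal is correct and follows exactly the paper's approach: the paper derives the corollary in one line by invoking Lemma~\ref{lem:degree-reduction} with $H=T$ and $d=1$. Your additional verifications that $T'$ is again a tree and that $\diam(T')=\diam(T)-2$ go beyond what the corollary itself asserts; the paper places these facts (implicitly) in the proof of Theorem~\ref{thm:tree} rather than here, but they are correct and harmless to include.
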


\begin{theorem}
  \label{thm:tree}
  Let $T$ be any tree with at least two edges. Then, \TED\ is \NPC.
  Furthermore, \TED\ is not solvable in time
  $2^{o(k)}\cdot |G|^{O(1)}$, unless ETH fails.
\end{theorem}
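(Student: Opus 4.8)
The plan is to prove Theorem~\ref{thm:tree} by strong induction on $\diam(T)$, exactly mirroring the structure laid out in the section's introduction. The quantity $\diam(T)$ is well-suited as an induction parameter because every tree with at least two edges has diameter at least $2$, and the two reduction tools already established cover precisely the base and inductive steps. First I would dispose of the base cases: if $\diam(T)=2$, then $T$ is a star graph $S_\ell$ for some $\ell\geq 2$, and Theorem~\ref{thm:star} gives both the \NPC\ and the parameterized lower bound. If $\diam(T)=3$, then $T$ is a twin-star graph $S_{\ell_1,\ell_2}$ with $\ell_1,\ell_2\geq 1$, and Theorem~\ref{thm:twin-star} settles this case. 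Note that a tree with at least two edges cannot have $\diam(T)\leq 1$, so these two base cases genuinely initialize the induction.

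For the inductive step, I would assume $\diam(T)>3$ and that the theorem holds for every tree with strictly smaller diameter. The key observation is that if $T'$ is obtained from $T$ by deleting all leaf vertices, then $T'$ is again a tree (deleting leaves from a tree yields a tree), and $\diam(T')=\diam(T)-2$, since removing the leaves shortens each longest induced path by exactly one edge at each end. Crucially, because $\diam(T)>3$, we have $\diam(T')\geq 2$, so $T'$ still has at least two edges and the induction hypothesis applies to it: \TDED\ is \NPC\ and admits no $2^{o(k)}\cdot|G|^{O(1)}$ algorithm unless ETH fails. I would then invoke Corollary~\ref{cor:tree-reduction}, which supplies a linear parameterized reduction from \TDED\ to \TED. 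Combined with the induction hypothesis for $T'$, this reduction transfers \NPH ness to \TED, and since \TED\ is trivially in \NP, it is \NPC. For the lower bound, Proposition~\ref{pro:lpr} converts the non-existence of a subexponential algorithm for \TDED\ into the same statement for \TED.

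The main subtlety to verify carefully is that $T'$, the tree obtained by deleting all leaves of $T$, is exactly the $H'=H[V']$ produced when Lemma~\ref{lem:degree-reduction} is applied with $H=T$ and $d=1$: the set $V'$ of vertices of degree more than $1$ is precisely the set of internal vertices of $T$, so $T[V']$ is $T$ with its leaves removed. This identification is what makes Corollary~\ref{cor:tree-reduction} applicable and is already asserted in the text preceding the corollary, so the step is routine once stated. I would also record explicitly that $T'$ remains connected and has diameter $\diam(T)-2$, ensuring the induction hypothesis is invoked on a legitimate smaller instance.

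The step I expect to require the most care is confirming the diameter drop $\diam(T')=\diam(T)-2$ and that $T'$ has at least two edges whenever $\diam(T)>3$; any off-by-one error here would either break the reduction to a valid smaller case or fail to cover some diameter value. Once these structural facts are pinned down, the rest is a clean assembly of Theorem~\ref{thm:star}, Theorem~\ref{thm:twin-star}, Corollary~\ref{cor:tree-reduction}, and Proposition~\ref{pro:lpr}, with no genuinely hard obstacle remaining.
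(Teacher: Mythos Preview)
Your proposal is correct and follows essentially the same approach as the paper: induction on $\diam(T)$ with base cases $\diam(T)=2$ (Theorem~\ref{thm:star}) and $\diam(T)=3$ (Theorem~\ref{thm:twin-star}), and the inductive step handled via Corollary~\ref{cor:tree-reduction} applied to the tree $T'$ obtained by deleting all leaves. Your write-up in fact spells out a few structural checks (that $T'$ is a tree, that $\diam(T')=\diam(T)-2\geq 2$, and the explicit invocation of Proposition~\ref{pro:lpr}) that the paper leaves implicit.
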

\begin{proof}
  We apply induction on the diameter of $T$.
  Theorems~\ref{thm:star} and \ref{thm:twin-star} prove the statements 
  when $\diam(T)=2$ and $\diam(T)=3$ respectively.
  Let the statements be true when $\diam(T)=t'$ for all $t'$ 
  such that $2\leq t'\leq t$ for some $t\geq 3$. Assume that $T$ has diameter $t+1$.
  Deleting all leaf vertices from 
  $T$ gives a graph $T'$ with diameter $t+1-2=t-1\geq 2$. 
  Now the statements follow from Corollary~\ref{cor:tree-reduction}.\qed
\end{proof}
\section{\RED}
\label{sec:regular}
In this section, for any connected $r$-regular graph $R$, where $r>2$,
we give a direct reduction either from 
\PTED\ or from \KTED\ to \RED.
The following three observations are used to prove the reduction which is given in 
Lemma~\ref{lem:reg-reduction}.

\begin{observation}
  \label{obs:reg-dom}
  Let $R$ be an $r$-regular graph for some $r>2$. 
  Let $V'\subseteq V(R)$ be such that $|V'|=3$.
  Then, $V\setminus V'$ is a dominating set in $R$.
\end{observation}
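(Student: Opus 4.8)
The plan is to verify the domination condition directly, vertex by vertex. By definition of a dominating set, every vertex of $V\setminus V'$ already dominates itself, so the only thing to check is that each of the three vertices of $V'$ has at least one neighbour lying in $V\setminus V'$.

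First I would fix an arbitrary vertex $v\in V'$. Since $R$ is $r$-regular, $v$ has exactly $r$ neighbours in $R$. The only neighbours of $v$ that fail to witness domination by $V\setminus V'$ are those that happen to lie in $V'$, and these are contained in $V'\setminus\{v\}$, a set of size at most $2$.

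The key step is then a one-line counting inequality. Because $r>2$, that is $r\geq 3$, the number of neighbours of $v$ lying outside $V'$ is at least $r-2\geq 1$. Hence $v$ has at least one neighbour in $V\setminus V'$ and is dominated. As $v$ was an arbitrary vertex of $V'$, and every vertex of $V\setminus V'$ dominates itself, it follows that $V\setminus V'$ is a dominating set of $R$.

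I do not expect any genuine obstacle here: the statement is an elementary pigeonhole count. The only subtle point worth flagging is that the hypothesis $r>2$ is used in exactly one place, namely to guarantee that the (at most two) remaining vertices of $V'$ cannot absorb all of $v$'s neighbours; were $r\leq 2$ allowed, a vertex of $V'$ could have all its neighbours inside $V'$ and the conclusion could fail.
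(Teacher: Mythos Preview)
Your argument is correct and is essentially identical to the paper's own proof: both observe that a vertex in $V'$ has at least $r-2\geq 1$ neighbours outside $V'$, while vertices in $V\setminus V'$ trivially dominate themselves. There is nothing to add.
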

\begin{proof}
  To prove that $V\setminus V'$ is a dominating set of $R$, we need to prove 
  that for every vertex $v\in V(R)$, either $v$ is in $V\setminus V'$ or 
  $v$ is adjacent to a vertex in $V\setminus V'$. If $v\notin V\setminus V'$,
  then $v\in V'$. Since $|V'| = 3$ and
  $v$ has degree $r\geq 3$, $v$ must have at least one edge to a vertex in $V\setminus V'$.\qed
\end{proof}

\begin{observation}
  \label{obs:reg-spans}
  Let $G$ be a graph and $r>0$ be an integer. Let $W\subseteq V(G)$ be
  such that every vertex in $W$ has degree $r$ in $G$ and $G[W]$ is connected.
  Let $R$ be any $r$-regular graph and $G$ has an induced copy of $R$ on a vertex set $W'$
  containing at least one vertex in $W$. Then $W\subseteq W'$.
\end{observation}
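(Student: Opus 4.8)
The plan is to exploit the fact that the induced copy of $R$ on $W'$ is $r$-regular, so every vertex of $W'$ has exactly $r$ neighbors \emph{inside} $W'$, and to combine this with the hypothesis that every vertex of $W$ has exactly $r$ neighbors in all of $G$. The matching of these two degree counts is the whole engine of the argument.

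First I would establish the central claim: if $v \in W \cap W'$, then every neighbor of $v$ in $G$ lies in $W'$. Indeed, since $G[W']$ is isomorphic to the $r$-regular graph $R$, the vertex $v$ has exactly $r$ neighbors inside $W'$, and these are among its neighbors in $G$. But $v \in W$, so $v$ has exactly $r$ neighbors in $G$ in total. Hence the $r$ neighbors lying in $W'$ already account for \emph{all} neighbors of $v$, giving $N_G(v) \subseteq W'$.

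Next I would propagate this property across $W$ using the connectivity of $G[W]$. Fix a vertex $w_0 \in W \cap W'$, which exists by hypothesis. For an arbitrary $v \in W$, connectivity of $G[W]$ yields a path $w_0 = x_0, x_1, \ldots, x_m = v$ with every $x_i \in W$ and consecutive vertices adjacent in $G$. I would prove by induction on $i$ that $x_i \in W \cap W'$: the base case is $x_0 = w_0$; for the inductive step, assuming $x_i \in W \cap W'$, the central claim gives $N_G(x_i) \subseteq W'$, so its neighbor $x_{i+1}$ lies in $W'$, and since $x_{i+1} \in W$ along the path, $x_{i+1} \in W \cap W'$. Taking $i = m$ shows $v \in W'$; as $v$ was arbitrary, $W \subseteq W'$.

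There is no serious obstacle here: the argument is a short induction along a path. The only point that needs care is the distinction between the degree of a vertex within $G[W']$ and its degree in the whole graph $G$. The step $N_G(v) \subseteq W'$ hinges precisely on both numbers equaling $r$, since the \emph{global} degree bound coming from membership in $W$ is exactly what forbids $v$ from having a neighbor outside $W'$. The connectedness of $G[W]$ is then what carries this ``all neighbors captured'' property from the single seed vertex $w_0$ to every vertex of $W$.
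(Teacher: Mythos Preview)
Your proof is correct and is essentially the same argument as the paper's: both hinge on the observation that a vertex in $W\cap W'$ has all of its $G$-neighbors inside $W'$ (by matching the degree $r$ in $G$ with the degree $r$ in the induced $r$-regular graph $G[W']$), and then use the connectedness of $G[W]$ to propagate. The only cosmetic difference is that the paper phrases it as a contradiction via a single crossing edge between $W\cap W'$ and $W\setminus W'$, whereas you do a direct induction along a path.
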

\begin{proof}
  Let $W''$ be $W\setminus W'$. For a contradiction, assume that $W''$ is non-empty.
  It is given that $W\cap W'$ is non-empty, i.e., $W\setminus W''$ is non-empty. 
  Therefore, since $G[W]$ is connected, there exists a vertex $v\in W''$ such that
  $v$ is adjacent to a vertex $u\in W\setminus W''$. Since $u\in W'$ and $G[W']$ 
  induces an $r$-regular graph and $u$
  has degree $r$ in $G$, we obtain that every neighbor of $u$ must be in $W'$. 
  This is a contradiction as $v$ is a neighbor of $u$ and is not in $W'$.
  Hence $W\subseteq W'$.\qed
\end{proof}

\begin{observation}
  \label{obs:threeiso}
  Let $G$ and $G'$ be two graphs such that $|V(G)|=|V(G')|=3$ and $|E(G)|=|E(G')|$.
  Then $G$ and $G'$ are isomorphic.
\end{observation}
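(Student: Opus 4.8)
The plan is to exploit the fact that a simple graph on three vertices has at most $\binom{3}{2}=3$ edges, so the common value $m := |E(G)| = |E(G')|$ lies in $\{0,1,2,3\}$. I would then show that for each fixed value of $m$ there is exactly one graph on three vertices up to isomorphism; since $G$ and $G'$ share the same $m$, this immediately yields that they are isomorphic. The proof is therefore just a short case analysis on $m$.

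To halve the work I would first note that complementation preserves isomorphism and turns a three-vertex graph with $m$ edges into one with $3-m$ edges, so it suffices to treat $m=0$ and $m=1$, the cases $m=3$ and $m=2$ following by passing to complements. For $m=0$ both $G$ and $G'$ are the empty graph on three vertices, and any bijection of their vertex sets is (vacuously) an isomorphism. For $m=1$ each of $G$ and $G'$ is a single edge together with one isolated vertex, so I would map the two endpoints of the edge of $G$ to the two endpoints of the edge of $G'$ and the isolated vertex of $G$ to that of $G'$; this bijection preserves both adjacency and non-adjacency. (If one prefers to avoid complementation, the remaining cases are equally easy to handle directly: for $m=3$ both graphs are $K_3$ and every bijection works, while for $m=2$ each graph is a $P_3$ whose unique degree-$2$ vertex must be matched with the corresponding degree-$2$ vertex of the other.)

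There is essentially no hard step here; the statement is a finite, exhaustive check over the four possible values of $m$. The only point requiring a line of care is verifying that the bijection produced in the $m=1$ case (equivalently, via complementation, the $m=2$ case) is genuinely an isomorphism, and this is immediate because there is a single edge to be matched on either side. Everything else is bookkeeping.
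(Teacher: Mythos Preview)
Your proposal is correct and takes essentially the same approach as the paper: a case analysis on the number of edges $m\in\{0,1,2,3\}$, identifying the unique three-vertex graph in each case (null graph, $K_1\cup K_2$, $P_3$, $K_3$). The complementation shortcut you add is a minor embellishment but does not change the argument.
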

\begin{proof}
  If a graph has exactly three vertices, the graph is completely defined
  by its number of edges $e$: If $e=0$, the graph is a null graph,
  if $e=1$, the graph is $K_1\cup K_2$, if $e=2$, the graph is a $P_3$
  and if $e=3$, the graph is a $K_3$.\qed
\end{proof}

\begin{lemma}
  \label{lem:reg-reduction}
  Let $R$ be any connected $r$-regular graph for any $r>2$. 
  Assume that there exists a set of vertices $V'\subseteq V(R)$ such that
  $R[V']$ is a $P_3$ or a $K_3$ and $R - V'$ is connected. 
  Let $R[V']$ be $H'$. Then, there is a linear parameterized reduction from
  \HDED\ to \RED.
\end{lemma}
\begin{proof}
  Let $(G',k)$ be an instance of \HDED. We apply Construction~\ref{con:nonadj} 
  on $(G',k,H=R,V')$ to obtain $G$. 
  We claim that $(G',k)$ is a yes-instance of \HDED\ if and only if $(G,k)$ is a yes-instance of \RED. 

  Let $F'$ be a solution of size at most $k$ of $(G',k)$. We claim that $F'$
  is a solution of $(G,k)$. Let $G''$ be $G - F'$. Assume that the claim is false. 
  Then, there is a set of vertices $U\subseteq V(G'')$ which induces $R$ in $G''$.
  Since $R\setminus V'$ is connected, there is 
  a set of vertices $U'\subseteq U$ which induces $H'$ in $G''$ such that $G''[U\setminus U']$ is a connected graph.  
  Since $G'-F'$ is $H'$-free, at least one vertex $v\in U'$ must be from a branch $V_j$.
  Since $R\setminus V'$ is connected, by the construction, 
  $V_j$ induces a connected graph in $G$ and hence in $G''$. Furthermore,
  every vertex in $V_j$ has degree $r$ in $G''$.
  Now, by Observation~\ref{obs:reg-spans} (invoked with $G=G''$, $W=V_j$ and $W'=U$), every vertex in $V_j$ is in $U$. 
  Since $|V'|=3$, by the construction, $|V_j|=|U|-3$. Hence, by Observation~\ref{obs:reg-dom}
  (invoked with $V'=U\setminus V_j$), $V_j$ is a dominating set in $G''[U]$.
  Therefore, $U=V_j\cup B_j$ where $B_j$ is the set of base vertices of $V_j$ in $G$.
  Since every vertex in $V_j$ has degree $r$ and $G''[U]$ induces an $r$-regular graph, every edge incident to
  the vertices in $V_j$ is in $G''[U]$, i.e., $E_j\subseteq E(G''[U])$, where $E_j$ is the edge set 
  introduced along with $V_j$ in Construction~\ref{con:nonadj}. 
  Now, by an edge counting argument, $E(G''[B_j])$ must have $|E(H')|$ number of edges.
  Therefore, since $|B_j|=3$, by Observation~\ref{obs:threeiso}, $B_j$ induces $H'$ in $G'-F'$, 
  which is a contradiction.
  Lemma~\ref{lem:con:nonadj-backward} proves the converse.\qed
\end{proof}

\begin{observation}
  \label{obs:connected-subgraph}
  Let $G$ be a connected graph with at least $d\geq 1$ vertices. 
  Then, there is a set of vertices $V'\subseteq V(G)$ such that
  $|V'|=d$ and $G[V']$ is connected.
\end{observation}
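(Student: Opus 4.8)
The plan is to build the set $V'$ greedily, one vertex at a time, maintaining the invariant that the current induced subgraph is connected. I would start with any single vertex $v_0$, so that $V'=\{v_0\}$ and $G[V']$ is trivially connected with $|V'|=1\le d$. The whole proof then reduces to iterating a single extension step until the counter $|V'|$ reaches $d$.

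The core of the argument is that extension step: as long as $|V'|<d$, there exists a vertex $v\in V(G)\setminus V'$ adjacent to some vertex of $V'$. Indeed, since $G$ is connected and $V'$ is a proper nonempty subset of $V(G)$ (proper because $|V'|<d\le |V(G)|$), there must be at least one edge of $G$ with one endpoint in $V'$ and the other in $V(G)\setminus V'$; otherwise $V'$ and $V(G)\setminus V'$ would lie in different components, contradicting connectivity. Adding such a $v$ keeps the subgraph connected: $G[V']$ is connected and $v$ is adjacent to some $u\in V'$, so every vertex of $V'\cup\{v\}$ reaches $v$ through $u$, giving that $G[V'\cup\{v\}]$ is connected. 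Repeating this step exactly $d-1$ times yields a set with $|V'|=d$ and $G[V']$ connected, which proves the observation.

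An equivalent and perhaps slicker route would fix a spanning tree $T$ of $G$ and repeatedly delete leaves. Each leaf deletion leaves a smaller tree, which is still connected, so after deleting $|V(G)|-d$ leaves the remaining $d$ vertices $V'$ span a subtree of $T$. Since the edges of $T$ are edges of $G$, the graph $G[V']$ contains this spanning subtree and is therefore connected. Either formulation could be written up in a few lines.

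I do not expect any genuine obstacle here: the statement is a routine connectivity fact, and indeed it is used in the paper only as a small technical building block (for instance to supply the hypotheses on $R-V'$ or $G[W]$ in the regular-graph reductions). The only point requiring a moment's care is the justification of the extension step, namely that a connected graph admits no proper nonempty vertex subset without a boundary edge; this is immediate from the definition of connectivity. Everything else is bookkeeping on the counter $|V'|$.
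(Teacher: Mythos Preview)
Your proposal is correct and is essentially the same as the paper's proof: the paper fixes a vertex $v$, runs a breadth-first search from $v$ until $d$ vertices have been visited, and takes those as $V'$. Your greedy one-vertex-at-a-time extension is exactly what BFS does (with a particular visiting order), and your spanning-tree/leaf-deletion variant is an equally valid alternative phrasing.
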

\begin{proof}
  Let $v$ be any vertex in $G$. Do a breadth first search
  starting from $v$ until $d$ number of vertices are visited.
  Let $V'$ be the set of visited vertices. 
  Clearly, $G[V']$ is connected.\qed
\end{proof}

The following lemma may be of independent interest. The assumption in Lemma~\ref{lem:reg-reduction} comes as a special case of it. 

\begin{lemma}
  \label{lem:peri}
  Let $H$ be any connected graph with minimum degree $d$ for any $d>2$. 
  Then, there exists $V'\subseteq V(H)$ such that $|V'|=d$, $H[V']$ is connected and
  $H\setminus V'$ is connected.
\end{lemma}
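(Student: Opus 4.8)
The plan is to reduce the whole statement to finding a suitable vertex ordering inside a single $2$-connected piece of $H$, exploiting the classical \emph{st-numbering} (bipolar orientation) of $2$-connected graphs. Recall that for a $2$-connected graph $B$ and any edge $\{s,t\}$ of $B$, there is an ordering $x_1=s,x_2,\dots,x_N=t$ of $V(B)$ in which every internal vertex has a neighbour of smaller index and a neighbour of larger index. Such an ordering has the property that \emph{every} prefix $\{x_1,\dots,x_i\}$ and \emph{every} suffix $\{x_i,\dots,x_N\}$ induces a connected subgraph of $B$: each new vertex of a prefix attaches backward to an earlier one, and each vertex of a suffix attaches forward to a later one. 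Given such an ordering with $N\ge d+1$, the set $V'=\{x_1,\dots,x_d\}$ is connected and its complement $\{x_{d+1},\dots,x_N\}$ is connected and non-empty. Thus the entire difficulty is to locate a $2$-connected part of $H$ large enough to host $V'$, and to make sure deleting $V'$ leaves the remainder of $H$ connected.

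First I would dispose of the case where $H$ itself is $2$-connected: here $|V(H)|\ge d+1$ (immediate from the minimum degree, since a vertex has at least $d$ neighbours), so an st-numbering of $H$ with $V'$ the first $d$ vertices finishes the proof directly. When $H$ has a cut vertex, I would pass to the block--cut tree of $H$ and pick a \emph{leaf block} $L$ together with its unique cut vertex $z$. Every vertex $x\in V(L)\setminus\{z\}$ is a non-cut vertex of $H$ and therefore lies in no other block, so all of its at least $d$ neighbours lie inside $L$; consequently $|V(L)|\ge d+1$, and $L$ is $2$-connected (a leaf block cannot be a single edge, since that would force a degree-$1$ vertex, impossible as $d>2$). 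I would then take an st-numbering of $L$ whose last vertex $x_N$ is $z$ (choosing $s$ to be any neighbour of $z$ in $L$), and set $V'=\{x_1,\dots,x_d\}$. Because $N=|V(L)|\ge d+1$ we get $z=x_N\notin V'$, the set $V'$ is connected as a prefix, and $L-V'$ is a connected suffix containing $z$.

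It then remains to check that $H-V'$ is connected. Since $V'\subseteq V(L)\setminus\{z\}$ lies in the interior of the leaf block $L$, its deletion cannot disturb the rest of $H$: every vertex outside $L$ reaches $z$ by a path avoiding the interior of $L$, so $H$ with the interior of $L$ removed is connected and contains $z$. Gluing this connected part to the connected set $L-V'$ at the common vertex $z$ shows that $H-V'$ is connected, completing both cases. The main obstacle is really the single structural input that a $2$-connected graph admits a vertex ordering all of whose prefixes and suffixes are connected; once this st-numbering is in hand, the remainder is block-decomposition bookkeeping together with the degree count $|V(L)|\ge d+1$. If one prefers a self-contained treatment, I would produce the same prefix/suffix-connected ordering from an open ear decomposition of $L$ (or from a lowpoint-based depth-first search), at the cost of a slightly longer verification.
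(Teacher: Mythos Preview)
Your argument is correct, and it takes a genuinely different route from the paper's proof. The paper argues by an extremal/exchange method: among all connected $d$-sets $V'$ (which exist by a BFS observation), choose one that maximises the size of the largest component of $H-V'$; assuming $H-V'$ is always disconnected, one shows the small components have fewer than $d$ vertices, and then the degree bound lets one build a new connected $d$-set (a star around a vertex of a small component) whose removal strictly enlarges the component containing $V_1$, a contradiction.

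Your proof instead localises the problem to a single $2$-connected block via the block--cut tree and then invokes an st-numbering to get a vertex ordering all of whose prefixes and suffixes are connected; the first $d$ vertices form $V'$, and choosing the unique cut vertex of a leaf block as the terminal vertex of the st-numbering guarantees it survives and glues the remainder of $H$ back together. This is cleaner and more structural once the st-numbering is available, and it gives an explicit construction of $V'$ rather than an existence proof by contradiction; the paper's argument, by contrast, is entirely self-contained and needs no external tool beyond ``BFS finds a connected $d$-subset.'' Your only real dependency is the Lempel--Even--Cederbaum st-numbering theorem (or an open-ear decomposition), which you correctly flag; everything else---the leaf-block size bound $|V(L)|\ge d+1$, the fact that interior vertices of a leaf block have all their neighbours inside it, and the gluing at $z$---is handled carefully and is sound.
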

\begin{proof}
  Let $\mathcal{H}$ be 
  the set of all connected graphs with $d$ number of vertices. 
  Since the minimum degree of $H$ is $d$, $H$ has at least $d+1$ vertices.
  Hence, by Observation~\ref{obs:connected-subgraph}, 
  there exists at least one $H'\in \mathcal{H}$
  as an induced subgraph of $H$. For a contradiction, assume that for every  $V'\subseteq V(H)$ 
  which induces any $H'\in \mathcal{H}$ in $H$, $H\setminus V'$ is disconnected.
  Among all such sets of vertices, consider a set of vertices 
  $V'\subseteq V(H)$ which induces any $H'\in \mathcal{H}$ in 
  $H$ such that $H - V'$ leaves a component with maximum number of vertices.
  Let the $t>1$ components of $H\setminus V'$ be composed of sets of vertices $V_1, V_2,\ldots, V_t$. 
  Without loss of generality, assume that $H[V_1]$ is a component with maximum
  number of vertices. Every other component has at most $d-1$ vertices. 
  Otherwise, by Observation~\ref{obs:connected-subgraph}, 
  there will be a connected induced subgraph of $d$ vertices 
  in that component deleting which we get a larger component composed of $V_1\cup V'$. 
  Consider $V_j$ for any $j$ such that $2\leq j\leq t$.
  We obtained that $|V_j|\leq d-1$. Hence, the degree of any vertex $v\in V_j$ is at most $d-2$ in $H[V_j]$. 
  Since the minimum degree of $H$ is $d$, there is at least $2$ edges from $v$ to $V'$. 
  Let the neighbourhood of $v$ in $V'$ be $V''$. If none of the vertices in $V''$ is adjacent to $V_1$, 
  then $v$ and any of its $d-1$ 
  neighbours induces a connected graph deleting which gives a larger component. 
  If one of the vertices in $V''$ is adjacent to $V_1$, excluding that we get $d-1$ 
  neighbours of $v$ which along with $v$ induce a connected subgraph and deleting which gives a larger component. 
  This is a contradiction.\qed
\end{proof}

\begin{corollary}
  \label{cor:p3k3}
  Let $H$ be a connected graph with minimum degree 3. 
  Then there exists an induced $P_3$ or $K_3$ with a vertex set $V'$ in $H$ such that $H\setminus V'$ is connected.
\end{corollary}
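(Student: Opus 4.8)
The plan is to read this off directly from Lemma~\ref{lem:peri}, specialized to $d=3$. Since $H$ is connected with minimum degree $3 > 2$, the hypotheses of Lemma~\ref{lem:peri} are met with $d=3$. Invoking it yields a vertex set $V'\subseteq V(H)$ with $|V'|=3$, such that $H[V']$ is connected and $H\setminus V'$ is connected. So the whole content of the statement beyond Lemma~\ref{lem:peri} is packaged in that lemma already; what remains is only to identify what a connected graph on three vertices can look like.

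The second and final step is a classification of $3$-vertex graphs. I would argue that any connected graph on exactly three vertices must have at least two edges (otherwise it has at most one edge and cannot connect all three vertices), and clearly at most $\binom{3}{2}=3$ edges. If $H[V']$ has two edges it is a $P_3$, and if it has three edges it is a $K_3$; these are the only two options. This can be phrased as a direct appeal to Observation~\ref{obs:threeiso}, since up to isomorphism a graph on three vertices is determined by its edge count, and the connected ones are precisely those with two or three edges. Because $V'$ is taken together with the \emph{induced} subgraph $H[V']$, this $P_3$ or $K_3$ is an induced copy in $H$, exactly as required.

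I do not anticipate any genuine obstacle here: the corollary is a one-line specialization once Lemma~\ref{lem:peri} is in hand, and the only ``work'' is the trivial observation that a connected three-vertex graph is either a path or a triangle. The single point worth stating carefully is that the three vertices returned by Lemma~\ref{lem:peri} induce a connected graph, so that the degenerate cases (an independent set, or a single edge plus an isolated vertex) are excluded and we really do land on $P_3$ or $K_3$.

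\begin{proof}
Apply Lemma~\ref{lem:peri} with $d=3$; this is valid since $H$ is connected with minimum degree $3>2$. We obtain a set $V'\subseteq V(H)$ with $|V'|=3$ such that $H[V']$ is connected and $H\setminus V'$ is connected. A connected graph on three vertices has at least two edges and at most three edges. By Observation~\ref{obs:threeiso}, if it has two edges it is a $P_3$ and if it has three edges it is a $K_3$. Hence $H[V']$ is an induced $P_3$ or $K_3$ with $H\setminus V'$ connected.\qed
\end{proof}
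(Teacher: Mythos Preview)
Your proposal is correct and matches the paper's approach: the paper states this as an immediate corollary of Lemma~\ref{lem:peri} with $d=3$, giving no separate proof, and your argument (invoke the lemma, then observe via Observation~\ref{obs:threeiso} that a connected three-vertex graph is $P_3$ or $K_3$) is exactly the intended one-line specialization.
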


\begin{theorem}
  \label{thm:regular}
  Let $R$ be a connected regular graph with at least two edges. Then, \RED\ is \NPC. 
  Furthermore, \RED\ is not solvable in time
  $2^{o(k)}\cdot |G|^{O(1)}$, unless ETH fails.
\end{theorem}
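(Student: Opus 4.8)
The plan is to prove both statements simultaneously by a case analysis on the regularity parameter, leveraging the reduction machinery already assembled. The key observation is that a \emph{connected} $r$-regular graph with at least two edges forces $r\geq 2$: if $r=0$ there are no edges at all, and the only connected $1$-regular graph is $K_2$, which has a single edge. So I would first record that $R$ is $r$-regular with $r\geq 2$, and then split into the cases $r=2$ and $r\geq 3$.

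The case $r=2$ I expect to be immediate. A connected $2$-regular graph is a cycle $C_\ell$ for some $\ell\geq 3$, so here \RED\ is literally \CLED, and both the \NPC-ness and the $2^{o(k)}$ lower bound are delivered directly by Proposition~\ref{pro:bases}(\ref{base:cycle}). The substantive case is $r\geq 3$. Here $R$ is a connected graph of minimum degree $r\geq 3$, so Corollary~\ref{cor:p3k3} applies and supplies a vertex set $V'$ that induces either a $P_3$ or a $K_3$ and for which $R-V'$ stays connected. This is exactly the hypothesis of Lemma~\ref{lem:reg-reduction}, which I would invoke with $H=R$ and $H'=R[V']$ to obtain a linear parameterized reduction from \HDED\ to \RED. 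Since $R[V']$ is a $P_3$ or a $K_3$, the source problem is \PTED\ or \KTED\ (note $K_3=C_3$), each of which is \NPC\ and admits no parameterized subexponential algorithm unless ETH fails, by Proposition~\ref{pro:bases}(\ref{base:pted}) and (\ref{base:cycle}). Feeding this through Proposition~\ref{pro:lpr} transfers the subexponential lower bound to \RED, while the underlying polynomial-time many-one reduction transfers \NPH-ness (and membership in \NP\ is routine), giving \NPC.

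The step I expect to carry the real weight is guaranteeing the separating $P_3$/$K_3$, i.e.\ the existence of a size-three connected induced subgraph whose deletion leaves $R$ connected. All of the graph-theoretic difficulty has been pushed into Lemma~\ref{lem:peri} and its Corollary~\ref{cor:p3k3}, so in the theorem's proof itself this is a one-line appeal; the only care needed is to confirm that $r\geq 3$ is precisely the minimum-degree hypothesis those results demand, and that the two sub-cases ($P_3$ versus $K_3$) match the two available base problems. Everything else is bookkeeping: checking that the reduction of Lemma~\ref{lem:reg-reduction} is indeed linear in the parameter (it is, since Construction~\ref{con:nonadj} does not alter $k$) and that linear parameterized reductions are in particular polynomial-time reductions, so both the classical and the fine-grained conclusions propagate.
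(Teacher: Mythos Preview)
Your proposal is correct and follows essentially the same approach as the paper: argue $r\geq 2$, dispatch $r=2$ via Proposition~\ref{pro:bases}(\ref{base:cycle}), and for $r\geq 3$ combine Corollary~\ref{cor:p3k3} with Lemma~\ref{lem:reg-reduction} and the base problems \PTED\ and \KTED. The only difference is cosmetic: you spell out the appeal to Proposition~\ref{pro:lpr} and the fact that $k$ is unchanged, which the paper leaves implicit.
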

\begin{proof}
  Let $R$ be an $r$-regular graph. Since $R$ is connected and has at least 2 edges, $r>1$.
  If $r=2$ then $R$
  is a cycle and the statements follow from Proposition~\ref{pro:bases}(\ref{base:cycle}). 
  Assume that $r\geq 3$. By Corollary~\ref{cor:p3k3}, there exists an induced
  $P_3$ or $K_3$ with a vertex set $V'$
  in $R$ such that $R-V'$ is connected. Now the statements follow from Lemma~\ref{lem:reg-reduction},
  Proposition~\ref{pro:bases}(\ref{base:pted}) and Proposition~\ref{pro:bases}(\ref{base:cycle}).\qed
\end{proof}

The complement graph of a regular graph with at least two non-edges 
is a regular graph with at least two edges. Thus, we obtain the following corollary.

\begin{corollary}
  \label{cor:regular-completion}
  Let $R$ be a regular graph with at least two non-edges. Then, 
  \REC\ is \NPC. Furthermore, \REC\ is not solvable in time
  $2^{o(k)}\cdot |G|^{O(1)}$, unless ETH fails.
\end{corollary}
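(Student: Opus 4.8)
The plan is to derive the result from the connected regular deletion result (Theorem~\ref{thm:regular}) by passing to complements. Recall the equivalence noted in the introduction between \HEC\ and \HBED. Concretely, for a graph $G$ and a set $F$ of non-edges of $G$, the completion $G+F$ satisfies $\overline{G+F}=\overline{G}-F$, and $G+F$ contains an induced copy of $R$ if and only if $\overline{G}-F$ contains an induced copy of $\overline{R}$. Taking $H=R$, this shows that $(G,k)$ is a yes-instance of \REC\ if and only if $(\overline{G},k)$ is a yes-instance of \pname{$\overline{R}$-free Edge Deletion}. Since the complement of a graph is computable in polynomial time and complementation is an involution, and since the parameter $k$ is unchanged, the map $(G,k)\mapsto(\overline{G},k)$ is a parameter-preserving (hence linear parameterized) reduction between the two problems in both directions.

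I would first note that \REC\ lies in \NP: a nondeterministic algorithm guesses the at most $k$ edges to add and checks $R$-freeness in polynomial time, since $R$ is a fixed graph. It therefore remains to establish \NPH\ together with the subexponential lower bound, and both will be inherited through the reduction above. For this I use the structural fact stated just before the corollary: if $R$ is $r$-regular on $n$ vertices, then $\overline{R}$ is $(n-1-r)$-regular, and the hypothesis that $R$ has at least two non-edges is exactly the statement that $\overline{R}$ has at least two edges. Hence $\overline{R}$ is a regular graph with at least two edges, and Theorem~\ref{thm:regular} gives that \pname{$\overline{R}$-free Edge Deletion} is \NPC\ and admits no $2^{o(k)}\cdot|G|^{O(1)}$ algorithm unless ETH fails. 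Feeding this through the parameter-preserving reduction (invoking Proposition~\ref{pro:lpr} for the lower bound) yields both conclusions for \REC.

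The one point that needs care is the connectedness hypothesis of Theorem~\ref{thm:regular}, which is stated only for \emph{connected} regular graphs, whereas the complement of a regular graph need not be connected (for example $\overline{K_{3,3}}=2K_3$). When $\overline{R}$ is connected, the argument above is complete exactly as written. When $\overline{R}$ is disconnected, the same complementation reduction still applies, but to conclude hardness one would additionally need a deletion result covering disconnected regular targets. I expect handling this disconnected complement to be the only genuine obstacle; everything else is a routine transfer across complementation.
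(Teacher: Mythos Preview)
Your approach is exactly the paper's: pass to complements, use the equivalence between \REC\ and \pname{$\overline{R}$-free Edge Deletion}, and invoke Theorem~\ref{thm:regular}. The paper's entire argument is the one-line observation preceding the corollary, namely that the complement of a regular graph with at least two non-edges is a regular graph with at least two edges.

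You are, in fact, more careful than the paper on one point. Your worry about connectedness is legitimate: Theorem~\ref{thm:regular} is stated only for \emph{connected} regular graphs, and $\overline{R}$ can be disconnected even when $R$ is connected (your example $\overline{K_{3,3}}=2K_3$ is apt). The paper's justification of Corollary~\ref{cor:regular-completion} does not address this, so the corollary as placed in Section~\ref{sec:regular} is, strictly speaking, only fully justified when $\overline{R}$ happens to be connected. The disconnected case is covered once Section~\ref{sec:disconnected} is in hand: if $\overline{R}$ is regular with at least two edges, every component is regular, so in particular a largest component is regular, and Theorem~\ref{thm:final} then gives the \NP-completeness and ETH lower bound for \pname{$\overline{R}$-free Edge Deletion}. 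So your ``only genuine obstacle'' is already resolved later in the paper; it is simply a forward reference that the paper omits.
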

\section{Handling Disconnected Graphs}
\label{sec:disconnected}

We have seen in Sections~\ref{sec:tree} and \ref{sec:regular} that for any tree or connected regular graph $H$ with 
at least two edges,
\HED\ is \NPC\ and does not admit parameterized subexponential time algorithm
unless ETH fails. In this section, we extend these results to any $H$ with at least two edges such that
$H$ has a largest component which is a tree or a regular graph.

\begin{lemma}
  \label{lem:disconnected}
  Let $H$ be a graph with $t\geq 1$ components. 
  Let $H_1$ be a component of $H$ with maximum number of vertices. 
  Let $H'$ be the disjoint union of 
  all components of $H$ isomorphic to $H_1$. Then, there is a 
  linear parameterized reduction from \HDED\ to \HED.
\end{lemma}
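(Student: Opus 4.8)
The plan is to use a direct ``padding'' construction. Write $H = sH_1 \sqcup R$, where $s$ is the number of components of $H$ isomorphic to $H_1$ (so that $H' = sH_1$) and $R$ is the disjoint union of all components of $H$ that are \emph{not} isomorphic to $H_1$; by choice of $H_1$, every component of $R$ has at most $|V(H_1)|$ vertices. Given an instance $(G',k)$ of \HDED, I would let $G$ be the disjoint union of $G'$ with $k+1$ vertex-disjoint copies of $R$, and keep the parameter unchanged, $k'=k$. Since $H$ (hence $R$) is fixed, $G$ has $|V(G')| + (k+1)|V(R)| = |G'| + O(k)$ vertices and is built in polynomial time, so once the yes-instance equivalence is established this is a linear parameterized reduction.

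The single structural fact that drives both directions is that \emph{$R$ (and indeed the whole gadget) contains no induced subgraph isomorphic to $H_1$}. I would prove this first: since $H_1$ is connected, any induced copy of $H_1$ must lie inside one component $C$ of $R$, and then $|V(H_1)| \le |V(C)| \le |V(H_1)|$ (the right inequality because $H_1$ is a largest component of $H$), forcing $C \cong H_1$ and contradicting that no component of $R$ is isomorphic to $H_1$.

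For the forward direction, I would take a solution $F'$ of $(G',k)$ and show it also solves $(G,k)$. If $G - F'$ contained an induced $H$, then, since $G$ splits as $G'$ versus the gadget with no edges across the cut and each of the $s$ copies of $H_1$ is connected, every such copy lies wholly inside $G'-F'$ or wholly inside the gadget; by the structural fact the gadget hosts none, so all $s$ copies sit in $G'-F'$ and yield an induced $sH_1 = H'$ there, contradicting that $F'$ is a solution of \HDED. For the backward direction, let $F$ solve $(G,k)$ and set $F' = F \cap E(G')$, so $|F'| \le k$. Because $F$ deletes at most $k$ edges and the copies of $R$ are edge-disjoint, at least one copy $R_0$ survives intact and still induces $R$ in $G-F$; if $G'-F'$ had an induced $sH_1$ on a vertex set $W$, then $W \cup V(R_0)$ would induce $(G'-F')[W] \sqcup R_0 \cong sH_1 \sqcup R = H$ in $G-F$ (again using that there are no edges between the two sides), contradicting that $F$ solves \HED. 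Hence $F'$ solves $(G',k)$.

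The main obstacle, and the reason the gadget is not simply a single copy of $R$, is that the construction must simultaneously prevent the gadget from \emph{stealing} a copy of $H_1$ in the forward direction and guarantee that the gadget \emph{survives} the deletions in the backward direction. The first is settled once and for all by the ``no induced $H_1$ in $R$'' fact, which in turn rests on $H_1$ being a connected component of maximum order; the second is settled by the redundancy of $k+1$ disjoint copies together with $|F| \le k$. Everything remaining is routine bookkeeping about induced subgraphs of disjoint unions.
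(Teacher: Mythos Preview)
Your proof is correct, and it takes a genuinely different and more elementary route than the paper.

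The paper applies its general Construction~\ref{con:nonadj} with $V'=V(H')$: for \emph{every} subgraph $C$ of $G'$ isomorphic to $H'$ it attaches $k+1$ fresh ``branches'', each playing the role of $V(H)\setminus V'$; the backward direction is then read off from the general Lemma~\ref{lem:con:nonadj-backward}, and the forward direction boils down to showing that no branch vertex can lie in an induced $H'$. Because here $V'$ is a union of whole components of $H$, there are no edges between $V'$ and its complement, so each branch is in fact an isolated copy of $R$ --- the paper's construction thus adds many disjoint copies of $R$ (one batch of $k+1$ per base), and its key claim reduces to exactly your structural fact that $R$ contains no induced $H_1$.

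Your padding construction is the stripped-down version: a single batch of $k+1$ copies of $R$ suffices, since any induced $sH_1$ in $G'$ can be completed to an induced $H$ using \emph{any} surviving copy of $R$. This is shorter, avoids the per-base bookkeeping, and makes the size of $G$ visibly $|G'|+O(k)$. What the paper's approach buys is uniformity: Construction~\ref{con:nonadj} and Lemma~\ref{lem:con:nonadj-backward} are reused verbatim across several lemmas (e.g.\ Lemma~\ref{lem:degree-reduction}, Lemma~\ref{lem:reg-reduction}), so this case falls out of the same template with no new gadget. Your argument, by contrast, exploits the special feature of this lemma --- that $V'$ is a union of components --- and would not generalize to those other settings.
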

\begin{proof}
  Let $V'\subseteq V(H)$ be the vertex set which induces $H'$ in $H$. 
  Let $(G',k)$ be an instance of \HDED. 
  We apply Construction~\ref{con:nonadj} on $(G',k,H,V')$ to obtain $G$.
  We claim that $(G',k)$ is a yes-instance of \HDED\ if and only if 
  $(G,k)$ is a yes-instance of \HED.

  Let $F'$ be a solution of size at most $k$ of $(G',k)$. 
  For a contradiction, assume that $G-F'$ has an induced $H$ with a vertex set $U$.
  Hence there is a vertex set $U'\subseteq U$ such that $U'$ induces $H'$ in $G - F'$. 
  It is straightforward to verify that a branch vertex can never be part of an induced $H'$ in $G - F'$. 
  Hence $U'$ does not contain a branch vertex and hence $U'$ induces an $H'$ in $G'-F'$, which is a contradiction.
  Lemma~\ref{lem:con:nonadj-backward} proves the converse.\qed
\end{proof}

Lemma~\ref{lem:ident} handles the case of disjoint union of isomorphic connected graphs.

\begin{lemma}
  \label{lem:ident}
  Let $H$ be any connected graph. 
  For every pair of integers $t,s$ such that $t\geq s\geq 1$, there is a linear parameterized reduction from 
  \SHED\ to \THED.
\end{lemma}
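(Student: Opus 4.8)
The plan is to build, from an instance $(G',k)$ of \SHED, an instance $(G,k)$ of \THED\ with the \emph{same} parameter $k$, so that the reduction is trivially linear, by attaching to $G'$ exactly $t-s$ gadget components, each behaving like a single copy of $H$ that is simultaneously \emph{robust} and \emph{thin}. Since $t\ge s$, the number $t-s$ of gadgets is nonnegative; if $t=s$ we simply output $(G',k)$, so assume $t>s$. The two properties I will extract from each gadget $\Gamma$ are: (R) for every edge set $F$ with $|F|\le k$, the graph $\Gamma-F$ still contains an induced $H$; and (T) $\Gamma$ contains no two vertex-disjoint induced copies of $H$. Property (R) is what lets the $t-s$ gadgets jointly supply $t-s$ disjoint induced copies of $H$ after \emph{any} deletion of at most $k$ edges, while property (T) is what prevents a single gadget from ever supplying two copies at once.

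For the gadget I would take $k+1$ vertex-disjoint copies of $H$, fix an arbitrary vertex $w\in V(H)$, and identify the $k+1$ copies of $w$ into a single vertex $v$; call the resulting connected graph $\Gamma$, a ``bouquet'' of $k+1$ copies of $H$ sharing only $v$. Because any two of the copies meet only in $v$, the copies are pairwise edge-disjoint, so any $F$ with $|F|\le k$ lies in at most $k$ of them and leaves at least one copy entirely untouched; the vertex set of that copy induces $H$ in $\Gamma-F$, which gives (R). For (T), observe that $\Gamma-v$ is the disjoint union of $k+1$ copies of $H-w$, so every connected component of $\Gamma-v$ has at most $|V(H)|-1$ vertices. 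As $H$ is connected, any induced copy of $H$ in $\Gamma$ lies inside one component of $\Gamma$; were it to avoid $v$ it would lie inside a single component of $\Gamma-v$, which is too small to hold $|V(H)|$ vertices. Hence every induced copy of $H$ in $\Gamma$ contains $v$, so no two can be vertex-disjoint, which gives (T).

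With $G=G'\sqcup\Gamma_1\sqcup\cdots\sqcup\Gamma_{t-s}$ the equivalence splits into two directions, both exploiting that $H$ is connected, so every induced $H$ in $G-F$ lies in a single component. For the forward direction, if $F'$ is a solution of $(G',k)$ I would claim $F'$ is a solution of $(G,k)$: in $G-F'$ the gadgets are untouched, so by (T) each $\Gamma_i$ contributes at most one member to any family of pairwise disjoint induced copies of $H$, while the components coming from $G'-F'$ contribute at most $s-1$ such copies in total, since $G'-F'$ is $sH$-free. The total is therefore at most $(s-1)+(t-s)=t-1$, so $G-F'$ has no induced $tH$. For the converse, let $F$ be a solution of $(G,k)$ and set $F'=F\cap E(G')$, so $|F'|\le k$. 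If $G'-F'$ still contained $s$ pairwise disjoint induced copies of $H$, then, since at most $k$ edges of $F$ lie in each $\Gamma_i$, property (R) yields an induced $H$ in every $\Gamma_i-F$, and all these copies sit in distinct components of $G-F$; assembling them we would obtain $s+(t-s)=t$ pairwise disjoint induced copies of $H$, i.e.\ an induced $tH$ in $G-F$, a contradiction. Hence $G'-F'$ is $sH$-free, and $(G',k)$ is a yes-instance.

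I expect the main obstacle to be exactly the simultaneous realization of (R) and (T) in one gadget: robustness forces many copies of $H$ into $\Gamma$, and a naive disjoint collection of copies would immediately create many vertex-disjoint induced copies, wrecking the forward direction, whereas thinness forbids even a second disjoint copy. Routing all $k+1$ copies through the single shared vertex $v$ is precisely what reconciles the two requirements, and the counting $(s-1)+(t-s)=t-1$ above is what the hypothesis $t\ge s$ is there to make work. Since the parameter is preserved, this is a linear parameterized reduction, so by Proposition~\ref{pro:lpr} it also transfers the absence of parameterized subexponential-time algorithms.
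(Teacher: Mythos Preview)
Your proof is correct and takes a genuinely different route from the paper's. The paper proceeds by induction on $t$, reducing \TOHED\ to \THED\ at each step; its single gadget $\hat{G}$ is a \emph{join} of $k+1$ copies of $H$ (all cross-edges present), and the key fact is that $\hat{G}$ is $2H$-free because any two non-adjacent vertices of $\hat{G}$ must lie in the same copy. You instead go directly from $s$ to $t$ by attaching $t-s$ gadgets at once, and your gadget is a \emph{bouquet}: $k+1$ copies of $H$ identified at a single vertex $v$, where thinness comes from the observation that every induced $H$ must pass through $v$ since the petals of $\Gamma-v$ are too small. Both gadgets realize the same pair of properties (R) and (T); your bouquet is sparser (linear in $k$ rather than quadratic), and your direct construction avoids the inductive bookkeeping, while the paper's join yields the slightly stronger conclusion that $\hat{G}$ is $2H$-free without any size-counting. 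One small point of phrasing: in the converse direction you should say ``$G'-F'$ contains an induced $sH$'' rather than ``$s$ pairwise disjoint induced copies of $H$'', since the latter does not by itself guarantee pairwise non-adjacency, which you need to assemble the induced $tH$ in $G-F$; the logic you intend is clearly the right one.
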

\begin{proof}
  The proof is by induction on $t$. 
  The base case when $t=s$ is trivial. 
  Assume that the statement is true for $t-1$, if $t-1\geq s$. 
  Now, we give a linear parameterized reduction from \TOHED\ to \THED. 

  Let $(G',k)$ be an instance of \TOHED. Let $G''$ be a disjoint union of $k+1$
  copies of $H$. Make every pair of vertices $(v_i,v_j)$ adjacent in $G''$ such that 
  $v_i\in V(H_i)$ and $v_j\in V(H_j)$ where $H_i$ and $H_j$ are 
  two different copies of $H$ in $G''$. Let the resultant
  graph be $\hat{G}$. 
  Let $G$ be the disjoint union of $G'$ and $\hat{G}$. We need to prove that
  $(G',k)$ is a yes-instance of \TOHED\ if and only if $(G,k)$ is a yes-instance
  of \THED. 

  Let $F'$ be a solution of size at most $k$ of $(G',k)$. 
  It is straightforward to verify that $\hat{G}$ is $2H$-free.
  Hence, if $G-F'$ has an induced $tH$ then $G'-F'$ has 
  an induced $(t-1)H$, which is a contradiction.
  Conversely, let $(G,k)$ be a yes-instance of \THED.
  Let $F$ be a solution of size at most $k$ of $(G,k)$. 
  For a contradiction, assume that $G'-F$ has an induced $(t-1)H$ with a vertex set $U$.
  Since $|F|\leq k$, $F$ cannot kill all the induced $H$s in $\hat{G}$. Hence, let $U'\subseteq V(\hat{G})$
  induces an $H$ in $G-F$.
  Therefore, $U\cup U'$ induces $tH$ in $G-F$, which is a contradiction.\qed
\end{proof}

Corollary~\ref{cor:ident} is obtained by invoking Lemma~\ref{lem:ident} with $s=1$.
Lemma~\ref{lem:disconnected-final} follows from Lemma~\ref{lem:disconnected} and Corollary~\ref{cor:ident}.
\begin{corollary}
  \label{cor:ident}
  Let $H$ be any connected graph. 
  For every integer $t\geq 1$, there is a linear parameterized reduction from \HED\ to \THED.
\end{corollary}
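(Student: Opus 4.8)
The plan is to derive Corollary~\ref{cor:ident} as the $s=1$ specialization of Lemma~\ref{lem:ident}, which has already done all of the work. First I would record the trivial but essential observation that $1H$, the disjoint union of a single copy of the connected graph $H$, is just $H$ itself; hence the problem \SHED\ read at $s=1$ is syntactically identical to \HED. With this identification in hand, the corollary is nothing more than Lemma~\ref{lem:ident} evaluated at $s=1$.

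Concretely, I would invoke Lemma~\ref{lem:ident} with the given connected $H$, the given $t\geq 1$, and $s=1$. The hypothesis $t\geq s\geq 1$ of that lemma is satisfied because $t\geq 1=s$, so the lemma applies and yields a linear parameterized reduction from \SHED\ to \THED. Rewriting \SHED\ as \HED\ via the observation above gives exactly the claimed linear parameterized reduction from \HED\ to \THED, completing the argument.

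There is essentially no obstacle here: the real content---the inductive construction in the proof of Lemma~\ref{lem:ident} that glues a $(k+1)$-fold clique-joined copy of $H$ onto the instance and argues that $\hat{G}$ is $2H$-free---is already established and may be assumed. The only points to verify are that the specialization is legitimate and that linearity is preserved, and both are immediate: the base case $t=1=s$ is the identity reduction, each inductive step of Lemma~\ref{lem:ident} leaves the parameter $k$ unchanged, and the number of steps $t-1$ is constant since $t$ is fixed, so the overall reduction runs in polynomial time with parameter $k'=k=O(k)$. Thus the only genuinely new ingredient is the bookkeeping identity $1H=H$.
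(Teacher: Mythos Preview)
Your proposal is correct and follows exactly the paper's own approach: the paper obtains Corollary~\ref{cor:ident} by invoking Lemma~\ref{lem:ident} with $s=1$, which is precisely what you do. Your additional remarks about $1H=H$ and the preservation of linearity are accurate elaborations of this one-line derivation.
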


\begin{lemma}
  \label{lem:disconnected-final}
  Let $H$ be a graph such that $H$ has a component with at least two edges.
  Let $H_1$ be a component of $H$
  with maximum number of vertices. Then there is a linear parameterized reduction
  from \HOED\ to \HED.
\end{lemma}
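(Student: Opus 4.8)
The plan is to obtain this reduction entirely by composing two results already proved in this section, using the routine fact that a composition of two linear parameterized reductions is again a linear parameterized reduction: polynomial time composed with polynomial time is polynomial, and a constant-factor parameter blow-up followed by another constant-factor blow-up is still a constant factor. So the whole lemma should amount to gluing Corollary~\ref{cor:ident} to Lemma~\ref{lem:disconnected}, with the intermediate problem chosen so that the two match up.

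First I would fix the relevant data. Since $H_1$ is a component of $H$, it is connected, and this is precisely the property the argument needs, because Corollary~\ref{cor:ident} is stated only for connected graphs. The hypothesis that $H$ has a component with at least two edges forces the largest component $H_1$ to have at least three vertices and hence, being connected, at least two edges; this is what later makes $H_1$ a genuine target for the tree and regular-graph hardness results, but it plays no role in the reduction itself. Let $t\geq 1$ be the number of components of $H$ isomorphic to $H_1$ — a fixed constant depending only on the fixed graph $H$ — and let $H'$ be their disjoint union, so that $H'=tH_1$. This $H'$ is exactly the graph denoted $H'$ in the statement of Lemma~\ref{lem:disconnected}.

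The reduction is then a two-step chain. The first step is Corollary~\ref{cor:ident} applied with $H_1$ in the role of the connected graph and with this value of $t$: it provides a linear parameterized reduction from \HOED\ to $tH_1$-free Edge Deletion (whose correctness rests, through Lemma~\ref{lem:ident}, on the $2H_1$-freeness of the join gadget $\hat{G}$). Since $tH_1=H'$, this intermediate problem is literally the problem \HDED\ of Lemma~\ref{lem:disconnected}. The second step is Lemma~\ref{lem:disconnected} itself, applied to $H$ with largest component $H_1$, giving a linear parameterized reduction from \HDED\ to \HED\ (there, no branch vertex can take part in an induced copy of $H'$, which is what carries the forward direction). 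Composing the two steps yields the desired linear parameterized reduction from \HOED\ to \HED.

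I do not expect any substantive obstacle; all the real content was isolated in Lemma~\ref{lem:disconnected} and Corollary~\ref{cor:ident}, and this lemma merely stitches them together. The only points requiring care are bookkeeping: confirming that $t$ is a legitimate integer $t\geq 1$ (it is, since $H_1$ is itself such a component) so that Corollary~\ref{cor:ident} applies; checking that the intermediate problem $tH_1$-free Edge Deletion coincides exactly with the \HDED\ instance of Lemma~\ref{lem:disconnected}, which holds because $H'$ collects precisely the $t$ isomorphic copies of $H_1$ and nothing else; and verifying that linearity of the parameter blow-up is preserved under composition, which is immediate.
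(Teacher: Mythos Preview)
Your proposal is correct and follows exactly the paper's approach: define $H'=tH_1$ as the disjoint union of the components of $H$ isomorphic to $H_1$, apply Corollary~\ref{cor:ident} to reduce \HOED\ to \HDED, apply Lemma~\ref{lem:disconnected} to reduce \HDED\ to \HED, and compose. Your additional remarks on the role (or non-role) of the ``at least two edges'' hypothesis and the bookkeeping for $t\geq 1$ are accurate but go slightly beyond what the paper's own proof records.
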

\begin{proof}
  Let $H'$ be the disjoint union of the components of $H$ which are
  isomorphic to $H_1$. By Lemma~\ref{lem:disconnected}, there is a linear
  parameterized reduction from \HDED\ to \HED.
  Then, by Corollary~\ref{cor:ident}, there is a linear parameterized reduction
  from \HOED\ to \HDED. Composing these two 
  reductions will give a linear parameterized reduction from 
  \HOED\ to \HED.\qed
\end{proof}

\begin{theorem}
  \label{thm:tk2}
  For every $t>1$, \TKTED\ is \NPC. Furthermore, \TKTED\ is not solvable in time
  $2^{o(k)}\cdot |G|^{O(1)}$, unless ETH fails.
\end{theorem}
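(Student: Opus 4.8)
The plan is to anchor the argument at the base problem \TWKTED\ and lift it to arbitrary $t$ using the disjoint-union machinery already developed in this section, rather than going through the general disconnected-graph reduction of Lemma~\ref{lem:disconnected-final}.

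First I would observe that $tK_2$ is exactly the disjoint union of $t$ copies of the connected graph $K_2$, so every component is isomorphic to $K_2$ and the largest component carries only a single edge. This is the point that requires care: Lemma~\ref{lem:disconnected-final} cannot be invoked here, because it requires a component with at least two edges; and the natural ``reduce from the problem on the largest component'' route would land on $K_2$-free Edge Deletion, which is trivially solvable in polynomial time and hence useless as a source of hardness. So the correct starting point is $2K_2$, not the single edge $K_2$.

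The key step is then a single application of Lemma~\ref{lem:ident} with $H=K_2$ and $s=2$: for every $t\geq 2$ this yields a linear parameterized reduction from \SHED, which in this instantiation is precisely \TWKTED, to \THED, which is precisely \TKTED. For $t=2$ the statement is immediate, being both the base case $t=s$ of that lemma and the content of Proposition~\ref{pro:bases}(\ref{base:twkted}).

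Finally, to conclude I would note that \TKTED\ belongs to \NP\ (it is an instance of \HED\ for the fixed graph $H=tK_2$), so \NP-hardness upgrades to \NP-completeness; and since \TWKTED\ admits no $2^{o(k)}\cdot|G|^{O(1)}$ algorithm unless ETH fails, Proposition~\ref{pro:lpr} transports this parameterized lower bound across the linear parameterized reduction to \TKTED. I do not expect a genuine obstacle in the argument; the only real subtlety is the bookkeeping observation above that one must anchor the reduction at $2K_2$ rather than at $K_2$.
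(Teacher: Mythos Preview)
Your proposal is correct and matches the paper's proof essentially verbatim: the paper's argument is precisely ``Follows from Proposition~\ref{pro:bases}(\ref{base:twkted}) and Lemma~\ref{lem:ident} (invoked with $s=2$).'' Your additional remark that Lemma~\ref{lem:disconnected-final} is inapplicable here because the largest component $K_2$ has only one edge is a correct and useful piece of context that the paper leaves implicit.
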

\begin{proof}
  Follows from Proposition~\ref{pro:bases}(\ref{base:twkted}) and 
  Lemma~\ref{lem:ident} (invoked with $s=2$).\qed
\end{proof}

\begin{theorem}
  \label{thm:final}
  Let $H$ be any graph with at least two edges such that 
  a largest component of $H$ is 
  a tree or a regular graph. Then \HED\ is \NPC. Furthermore, \HED\ is not solvable in time
  $2^{o(k)}\cdot |G|^{O(1)}$, unless ETH fails.
\end{theorem}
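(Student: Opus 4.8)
The plan is to lift the two connected results of the previous sections to an arbitrary $H$ by passing through its largest component and invoking Lemma~\ref{lem:disconnected-final}. First I would fix $H_1$ to be a component of $H$ with the maximum number of vertices; by hypothesis $H_1$ is a tree or a regular graph, and since it is a component it is connected. The whole argument then splits according to how many edges $H_1$ carries, the decisive threshold being whether $H_1$ has at least two edges.

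In the main case, $H_1$ has at least two edges. Then $H_1$ is either a tree with at least two edges or a connected regular graph with at least two edges, so Theorem~\ref{thm:tree} or Theorem~\ref{thm:regular} applies and yields that \HOED\ is \NPC\ and admits no $2^{o(k)}\cdot|G|^{O(1)}$ algorithm unless ETH fails. Because $H_1$ is itself a component of $H$ with at least two edges, the hypothesis of Lemma~\ref{lem:disconnected-final} is met, and that lemma furnishes a linear parameterized reduction from \HOED\ to \HED. Proposition~\ref{pro:lpr} then transfers the subexponential lower bound to \HED; and since a linear parameterized reduction is in particular a polynomial-time many-one reduction and \HED\ lies in \NP\ for fixed $H$, proving \NPH\ suffices, so \NPC\ transfers as well.

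It remains to treat the degenerate case in which $H_1$ carries fewer than two edges. Here I would first rule out an edgeless $H_1$: an edgeless largest component forces every component to be a single vertex and hence $H$ to have no edges, contradicting the assumption of at least two edges. Therefore $H_1=K_2$, every component has at most two vertices, and since $H$ has at least two edges it is the disjoint union of $t\geq 2$ copies of $K_2$ (together, possibly, with isolated vertices). The pure family $tK_2$ is settled directly by Theorem~\ref{thm:tk2}, which supplies both \NPC\ and the parameterized lower bound, so this boundary case does not require the disconnected-reduction machinery at all.

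I expect the only genuine subtlety to be the case split itself, namely guaranteeing that $H_1$ has two edges before invoking Lemma~\ref{lem:disconnected-final}: that lemma crucially requires $H$ to contain a component with at least two edges, so a graph whose largest component is merely $K_2$ lies outside its scope and must be routed separately through Theorem~\ref{thm:tk2} (which is exactly why \TKTED\ was proved as a standalone result). Once the proof is organized around this dichotomy---$H_1$ with at least two edges handled by the tree/regular machinery composed with Lemma~\ref{lem:disconnected-final}, and the lone $K_2$-largest-component case handled by the $tK_2$ result---what remains is a short bookkeeping argument with no further computation.
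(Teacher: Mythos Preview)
Your plan matches the paper's proof: split on whether a largest component $H_1$ has at least two edges, handle the main case by composing Theorem~\ref{thm:tree} or Theorem~\ref{thm:regular} with Lemma~\ref{lem:disconnected-final}, and route the $H_1=K_2$ case through Theorem~\ref{thm:tk2}.

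There is, however, one small gap in the boundary case. You correctly observe that when $H_1=K_2$ the graph $H$ is $tK_2\cup t'K_1$ for some $t\ge 2$ and $t'\ge 0$, but then you invoke Theorem~\ref{thm:tk2} only for ``the pure family $tK_2$'' and assert that no disconnected-reduction machinery is needed. When $t'>0$ this is not automatic: being $(tK_2\cup t'K_1)$-free and being $tK_2$-free are genuinely different properties, so a bridging reduction is still required. The paper closes this exactly via Lemma~\ref{lem:disconnected} (applied with $H'=tK_2$), which is part of the machinery you dismissed. Alternatively, you could argue directly that adjoining $t'$ fresh isolated vertices to the input graph is a valid linear parameterized reduction from \textsc{$tK_2$-free Edge Deletion} to \textsc{$(tK_2\cup t'K_1)$-free Edge Deletion}; this works, but must be stated rather than waved away. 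With that one step supplied, your argument is complete and coincides with the paper's.
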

\begin{proof}
  Let $H$ be $tK_2\bigcup t'K_1$, for some $t'\geq 0$. Since $H$ has at least two
  edges, $t>1$. Then the statements follow from Theorem~\ref{thm:tk2} and Lemma~\ref{lem:disconnected}.
  If $H$ is not $tK_2\bigcup t'K_1$, let $H_1$ be a largest component which is a tree or a regular graph.
  Clearly, $H_1$ has at least two edges.
  Then, Lemma~\ref{lem:disconnected-final} gives a linear parameterized reduction
  from \HOED\ to \HED. Now, the theorem follows from Theorem~\ref{thm:tree} and Theorem~\ref{thm:regular}.\qed
\end{proof}

Since \HED\ is equivalent to \HBEC, we obtain the following corollary.

\begin{corollary}
  \label{cor:completion}
  Let $\mathcal{H}$ be the set of all graphs $H$ with at least two
  edges such that $H$ has a largest component which is either a tree or a regular graph.
  Let $\overline{\mathcal{H}}$ be the set of graphs such that
  a graph is in $\overline{\mathcal{H}}$ if and only if its complement is 
  in $\mathcal{H}$. Then, for every $H\in \overline{\mathcal{H}}$, \HEC\ is \NPC.
  Furthermore, \HEC\ is not solvable in time
  $2^{o(k)}\cdot |G|^{O(1)}$, unless ETH fails.
\end{corollary}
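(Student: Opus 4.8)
The plan is to derive the completion result entirely from the deletion result already established in Theorem~\ref{thm:final}, by exploiting the complement duality between edge completion and edge deletion that was announced in the introduction. Fix $H\in\overline{\mathcal{H}}$. By the defining property of $\overline{\mathcal{H}}$, its complement $\overline{H}$ lies in $\mathcal{H}$, so $\overline{H}$ is a graph with at least two edges whose largest component is a tree or a regular graph. Hence Theorem~\ref{thm:final} applies with the fixed graph taken to be $\overline{H}$ and yields that \HBED\ is \NPC\ and admits no $2^{o(k)}\cdot|G|^{O(1)}$ algorithm unless ETH fails. The whole task is then to transport both of these conclusions from \HBED\ back to \HEC.

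The engine is the following observation, which I would state and prove first. For any graph $G_0$ and any integer $k$, the pair $(G_0,k)$ is a yes-instance of \HBED\ if and only if $(\overline{G_0},k)$ is a yes-instance of \HEC. The proof is a one-line complementation argument: deleting an edge set $F$ from $G_0$ is the same as adding $F$, a set of non-edges of $\overline{G_0}$, to $\overline{G_0}$, so that $\overline{G_0-F}=(V(\overline{G_0}),E(\overline{G_0})\cup F)$; and a vertex set induces $\overline{H}$ in a graph exactly when that same set induces $H$ in the complement. Thus $G_0-F$ is $\overline{H}$-free precisely when the completion $(V(\overline{G_0}),E(\overline{G_0})\cup F)$ is $H$-free, and the equivalence follows. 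The map $G_0\mapsto\overline{G_0}$ is computable in polynomial time, is its own inverse since $\overline{\overline{G_0}}=G_0$, and leaves the parameter $k$ unchanged, so it is a linear parameterized reduction in both directions.

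With the equivalence in hand, the two conclusions transfer immediately. For hardness, \HEC\ is in \NP\ because one can guess the at most $k$ added edges and verify $H$-freeness in polynomial time for the fixed graph $H$; and the complement map reduces the \NPH\ problem \HBED\ to \HEC, so \HEC\ is \NPH, hence \NPC. For the ETH bound, a hypothetical $2^{o(k)}\cdot|G|^{O(1)}$ algorithm for \HEC\ would, composed with the complement map, solve \HBED\ on any instance $(G_0,k)$ in time polynomial in $|G_0|$ plus $2^{o(k)}\cdot|\overline{G_0}|^{O(1)}$; since $\overline{G_0}$ has the same vertex set as $G_0$, its size $|\overline{G_0}|$ is $O(|G_0|^2)$, so the total is $2^{o(k)}\cdot|G_0|^{O(1)}$, contradicting Theorem~\ref{thm:final} for $\overline{H}$.

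I do not expect a genuine obstacle here, as the argument is a clean duality rather than a fresh construction. The only point that demands care is the size bookkeeping in the ETH transfer: complementation can turn a sparse graph into a dense one, so $|\overline{G_0}|$ may be quadratically larger than $|G_0|$. I would make explicit that this polynomial blow-up is harmless, because the lower bound constrains only the dependence on $k$ in the exponent, while the $|G|^{O(1)}$ factor absorbs any polynomial change in instance size and the parameter $k$ is preserved exactly by the complement map. Everything else reduces to the bijective complement correspondence and requires no additional machinery.
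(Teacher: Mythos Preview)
Your proposal is correct and follows exactly the paper's approach: the corollary is stated immediately after the sentence ``Since \HED\ is equivalent to \HBEC, we obtain the following corollary,'' with no further argument given, so the complement duality you spell out (together with your careful remark that the polynomial blow-up from complementation is absorbed by the $|G|^{O(1)}$ factor) is precisely what the paper is invoking implicitly.
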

\section{Concluding Remarks}

We proved that \HED\ is \NPC\ if $H$ is a graph with at least two 
edges and a largest component of $H$ is a tree or a regular graph.
We also proved that, for these graphs $H$, \HED\ cannot be solved in
parameterized subexponential time, unless Exponential Time Hypothesis fails. 
The same results apply for \HBEC.

Assume that we obtain a graph $H'$ from $H$ by deleting 
every vertex with degree $\delta(H)$, the minimum degree of $H$. Also assume that
\HDED\ is \NPC. Then by Lemma~\ref{lem:degree-reduction}, we obtain that
\HED\ is \NPC. The reduction in Lemma~\ref{lem:degree-reduction} 
is not useful if $H'$ is a graph with at most one edge, as for this
\HDED\ is polynomial time solvable. Hence we believe that,
if we can prove the \NP-completeness of \HDED\ where $H'$ is a 
graph in which the set of vertices with degree more than $\delta(G)$
induces a graph with at most one edge, we can prove that \HED\
is \NPC\ if and only if $H$ has at least two edges.

\bibliographystyle{plain}
\bibliography{arxiv}
%

\end{document}